\definecolor{darkgreen}{rgb}{0.125,0.5,0.169}
\newtheoremstyle{style}
  {\topsep} 
  {\topsep} 
  {} 
  {} 
  {\bfseries} 
  {.} 
  {.5em} 
  {} 
\newtheorem{problem}{Problem}
\newtheorem{definition}{Definition}
\newtheorem{proposition}{Proposition}
\newtheorem{lemma}{Lemma}
\newtheorem{theorem}{Theorem}
\newcommand{\squishlist}{\begin{list}{$\bullet$}
  { \setlength{\itemsep}{0pt}
     \setlength{\parsep}{3pt}
     \setlength{\topsep}{3pt}
     \setlength{\partopsep}{0pt}
     \setlength{\leftmargin}{1.5em}
     \setlength{\labelwidth}{1em}
     \setlength{\labelsep}{0.5em} } }
\newcommand{\squishend}{
  \end{list}  }
\newcommand{\problemqbff}{\ensuremath{\textsc{Qr-Bff}}}
\newcommand{\problembff}{\ensuremath{\textsc{Bff}}}
\newcommand{\problemtbff}{\ensuremath{\textsc{O$^2$Bff}}}
\newcommand{\problemtbffmm}{\ensuremath{\textsc{O$^2$Bff-mm}}}
\newcommand{\problemtbffma}{\ensuremath{\textsc{O$^2$Bff-ma}}}
\newcommand{\problemtbffam}{\ensuremath{\textsc{O$^2$Bff-am}}}
\newcommand{\problemtbffaa}{\ensuremath{\textsc{O$^2$Bff-aa}}}
\newcommand{\algotbff}{\ensuremath{\textsc{FindO$^2$Bff}}}
\newcommand{\algobff}{\ensuremath{\textsc{FindBff}}}
\newcommand{\algobffmm}{\ensuremath{\textsc{FindBff}_\textsc{M}}}
\newcommand{\algoQbffmm}{\ensuremath{\textsc{Qr-FindBff}_\textsc{M}}}
\newcommand{\algobffaa}{\ensuremath{\textsc{FindBff}_\textsc{A}}}
\newcommand{\algoQbffaa}{\ensuremath{\textsc{Qr-FindBff}_\textsc{A}}}
\newcommand{\algobffgreedy}{\ensuremath{\textsc{FindBff}_\textsc{G}}}
\newcommand{\problemaa}{\ensuremath{\textsc{Bff-aa}}}
\newcommand{\problemmm}{\ensuremath{\textsc{Bff-mm}}}
\newcommand{\problemQmm}{\ensuremath{\textsc{Qr-Bff-mm}}}
\newcommand{\problemQaa}{\ensuremath{\textsc{Qr-Bff-aa}}}
\newcommand{\problemam}{\ensuremath{\textsc{Bff-am}}}
\newcommand{\problemma}{\ensuremath{\textsc{Bff-ma}}}
\newcommand{\degree}{\ensuremath{\textit{degree}}}
\newcommand{\itg}{\ensuremath{\textsc{ITR}}}
\newcommand{\itk}{\ensuremath{\textsc{ITR}_{K}}}
\newcommand{\itr}{\ensuremath{\textsc{ITR}_{R}}}
\newcommand{\itc}{\ensuremath{\textsc{ITR}_{C}}}
\newcommand{\ind}{\ensuremath{\textsc{INC}_{D}}}
\newcommand{\ino}{\ensuremath{\textsc{INC}_{O}}}
\newcommand{\saa}{\ensuremath{\textit{aa}}}
\newcommand{\sam}{\ensuremath{\textit{am}}}
\newcommand{\sma}{\ensuremath{\textit{ma}}}
\newcommand{\smm}{\ensuremath{\textit{mm}}}
\newcommand{\spara}[1]{\smallskip\noindent{\bf{#1}}}
\newcommand{\etal}{\emph{et al.}}
\newcommand{\db}{\textit{DBLP}}
\newcommand{\dbten}{\textit{DBLP$_{10}$}}
\newcommand{\as}{\textit{AS}}
\newcommand{\cai}{\textit{Caida}}
\newcommand{\orf}{\textit{Oregon$_1$}}
\newcommand{\ort}{\textit{Oregon$_2$}}
\newcommand{\twitter}{\textit{Twitter}}
\newcommand{\calG}{\ensuremath{{\mathcal G}}}
\newcommand{\calC}{\ensuremath{{\mathcal C}}}
\begin{document}

\title{\LARGE \textbf{Best Friends Forever (BFF): Finding Lasting Dense Subgraphs}}

\author{
{Konstantinos Semertzidis{\small $~^{1}$}, Evaggelia Pitoura{\small $~^{1}$}, Evimaria Terzi{\small $~^{2}$}, Panayiotis Tsaparas{\small $~^{1}$} }%
\vspace{1.6mm}\\
\fontsize{10}{10}\selectfont\itshape
$^{1}$\,Dept. of Computer Science and Engineering, University of Ioannina, Greece\\
$^{2}$\,Dept. of Computer Science, Boston University, USA\\
\fontsize{9}{9}\selectfont\ttfamily\upshape

$^{1}$\,\{ksemer,pitoura,tsap\}@cs.uoi.gr\\
$^{2}$\,evimaria@cs.bu.edu%
}

\maketitle

\begin{abstract}
Graphs form a natural model for relationships and interactions between entities, for example, between people in social and cooperation networks, servers  in computer networks,
or tags and words in documents and tweets. But, which of these relationships or interactions are the most lasting ones?
	In this paper, we study the following problem: given a set of graph snapshots, which may correspond to the state of an evolving graph at different time instances, identify the set of nodes that are the most densely connected in all snapshots. We call this problem the {\it Best Friends For Ever}  ({\problembff}) problem.
	We provide definitions for density over multiple graph snapshots, that capture different semantics of connectedness over time, and
	we study the corresponding variants of the {\problembff} problem.
	We then look at the {\it On-Off} {\it {\problembff}}  ({\problemtbff}) problem that relaxes the requirement of nodes being connected in all snapshots, and asks for the densest set of nodes in at least $k$ of a given set of graph snapshots. We show that this problem is NP-complete for all definitions of density, and we propose a set of efficient algorithms. Finally, we present experiments with synthetic and real datasets that show both the efficiency of our algorithms and the usefulness of the {\problembff} and the {\problemtbff} problems.
\end{abstract}

\section{Introduction}
Graphs offer a natural model for capturing the interactions
and relationships among entities.
Oftentimes, multiple snapshots of a graph are available; for example, these snapshots may
correspond to the states of a dynamic graph at different time instances, or the states of a complex system at different conditions.
We call such sets of graph snapshots, a {\it graph history}.
Analysis of the graph history finds a large spectrum of applications, ranging from social-network marketing,
to virus propagation and digital forensics.
A  central question in this context is: {\it which interactions, or relationships in a graph history are the most lasting ones?}
In this paper, we formalize this question and we design algorithms that effectively identify such relationships.

In particular, given a graph history, we  introduce the problem of efficiently finding the set of nodes, that remains the most tightly connected through history.
We call this problem  the {\it Best Friends For Ever}  ({\problembff}) problem.
We formulate the {\problembff} problem as the problem of locating the set of nodes that have the maximum \emph{aggregate density} in the graph history.
We provide different definitions for the aggregate density that capture different notions of connectedness over time, and  result
in four variants of the {\problembff} problem.

We then extend the {\problembff} problem to capture the cases where subsets of nodes are densely connected for only a subset of the snapshots. Consider for example, a set of
collaborators that work intensely together for some years and then they drift apart, or, a set of friends in a social network that stop interacting for a few snapshots and then, they reconnect with each other.
To identify such subsets of nodes, we define the \emph{On-Off} {\problembff} problem, or {\problemtbff} for short. In the {\problemtbff}
problem, we ask for a set of nodes and a set of $k$ snapshots such that the aggregate density of the nodes over these snapshots is maximized.

Identifying  {\problembff} nodes finds many applications.
For example,
in collaboration and social networks such nodes correspond to well-acquainted individuals, and they can be chosen to form teams, or organize
successful professional or social events.
In a protein-interaction network, we can locate
protein complexes that are densely interacting
 at different states, thus indicating a possible underlying regulatory mechanism.
In a network where nodes are words or tags and edges correspond to their co-occurrences in documents or tweets published in a specific period of time,
identifying {\problembff} nodes may serve as a first step in topic identification, tag recommendation and other types of analysis.
In a computer network, locating servers that communicate heavily over time may be useful in identifying potential attacks, or bottlenecks.

The problem of identifying a dense subgraph in a static (i.e., single-snapshot) graph has received a lot of attention
(e.g., ~\cite{DBLP:conf/approx/Charikar00,goldberg1984finding,DBLP:conf/icalp/KhullerS09}).
There has been also work on finding dense subgraphs in dynamic graphs
(e.g., ~\cite{DBLP:conf/www/EpastoLS15}). However, in this line of work, the goal is to efficiently locate the densest subgraph in the current graph snapshot, whereas we are interested in locating subgraphs that remain dense in the whole graph history.
To the best of our knowledge, we are the first to systematically introduce and study density in a graph history, and define the {\problembff} and {\problemtbff} problems.
The most related work to ours is \cite{jethava15finding} where the authors
study just one of the four variants of the  {\problembff} problem in the context of graph databases. We compare the performance of our algorithms for this variant with the algorithm proposed in \cite{jethava15finding} experimentally.

We study the complexity of the different variants of the {\problembff} and {\problemtbff} problems. Two of the {\problembff} variants can be solved optimally, while the {\problemtbff} is NP-hard. We propose a generic algorithmic framework for solving our problems, that works in linear time.
Experimental results with real and synthetic datasets show the efficiency and effectiveness of
our algorithms in discovering lasting dense subgraphs.
Two case studies on bibliographic collaboration networks, and hashtag co-occurrence networks in {\twitter}  validate our approach.

To summarize,  the main contributions of this work are the following:
\squishlist
\item
We introduce the novel {\problembff} and {\problemtbff} problems of identifying  a subset of nodes that define dense subgraphs in a graph history.
To this end, we  extend the notion of density for graph histories, and provide definitions that capture different semantics of density over time leading to  four variants of our problems.

\item
We study the   complexity  of the variants of the {\problembff}  and {\problemtbff} problems and propose appropriate  algorithms. 
We prove the optimality, or the approximation factor of our algorithms whenever possible. 

\item We extend our definitions and algorithms to identify the {\problembff}s of an input set of query nodes.

\item We perform experiments with both real and synthetic datasets and demonstrate that our problem definitions are meaningful,
and that our algorithms work well in identifying dense subgraphs in practice.
\squishend

\vspace*{-0.05in}
\spara{Roadmap:}
In Section~\ref{sec:densities}, we provide definitions of aggregate density.
We introduce the {\problembff} problem  and  its algorithms in Section~\ref{sec:bff}, and the
{\problemtbff} problem and its algorithms in Section~\ref{sec:tbff}, while in Section~\ref{sec:generalized} we study extensions to the original problem.
Our experimental evaluation is presented in Section~\ref{sec:experiments} and comparison with related work in Section~\ref{sec:rw}. Section~\ref{sec:conclusions} concludes the paper.

\section{Aggregate density}\label{sec:densities}
%
We assume that we are given as input multiple graph snapshots over the same set of nodes.
Snapshots may be ordered, for example, when the snapshots correspond to the states of a dynamic graph.
We may also have an unordered collection of graphs, for example, when the snapshots correspond to graphs collected as a result of some scientific experiments. We refer to such graph collections as a \emph{graph history}.

\begin{definition} [{\sc Graph History}]
A  graph history  ${\calG}$ = $\{G_1$, $G_2$, $\dots,$ $G_{\tau}\}$ is a collection  of $\tau$ graph snapshots,
	where each snapshot $G_t = (V, E_t)$, $t$ $\in$ $[1, \tau]$, is defined over the same set of nodes $V$.
\end{definition}
An example of a  graph history with four snapshots is shown in Figure \ref{fig:example}. Note that our definition is applicable to graph snapshots with different set of nodes by considering $V$ as their union.

\begin{figure*}
	\centering
	\subfloat[$G_1$]{\includegraphics[width= 0.21\textwidth]{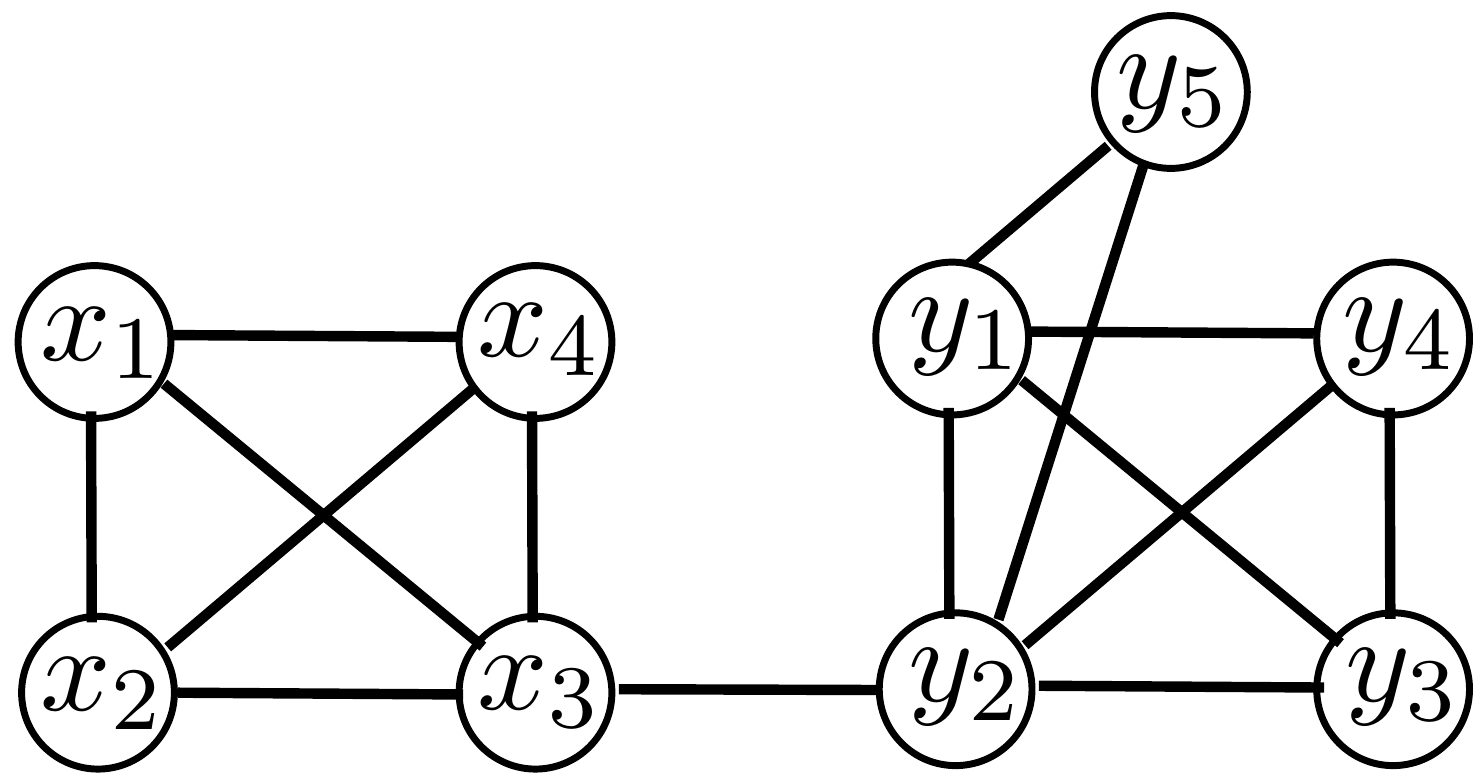}\label{fig:snapshot1}}
	\hspace{\fill}
	\subfloat[$G_2$]{\includegraphics[width= 0.21\textwidth]{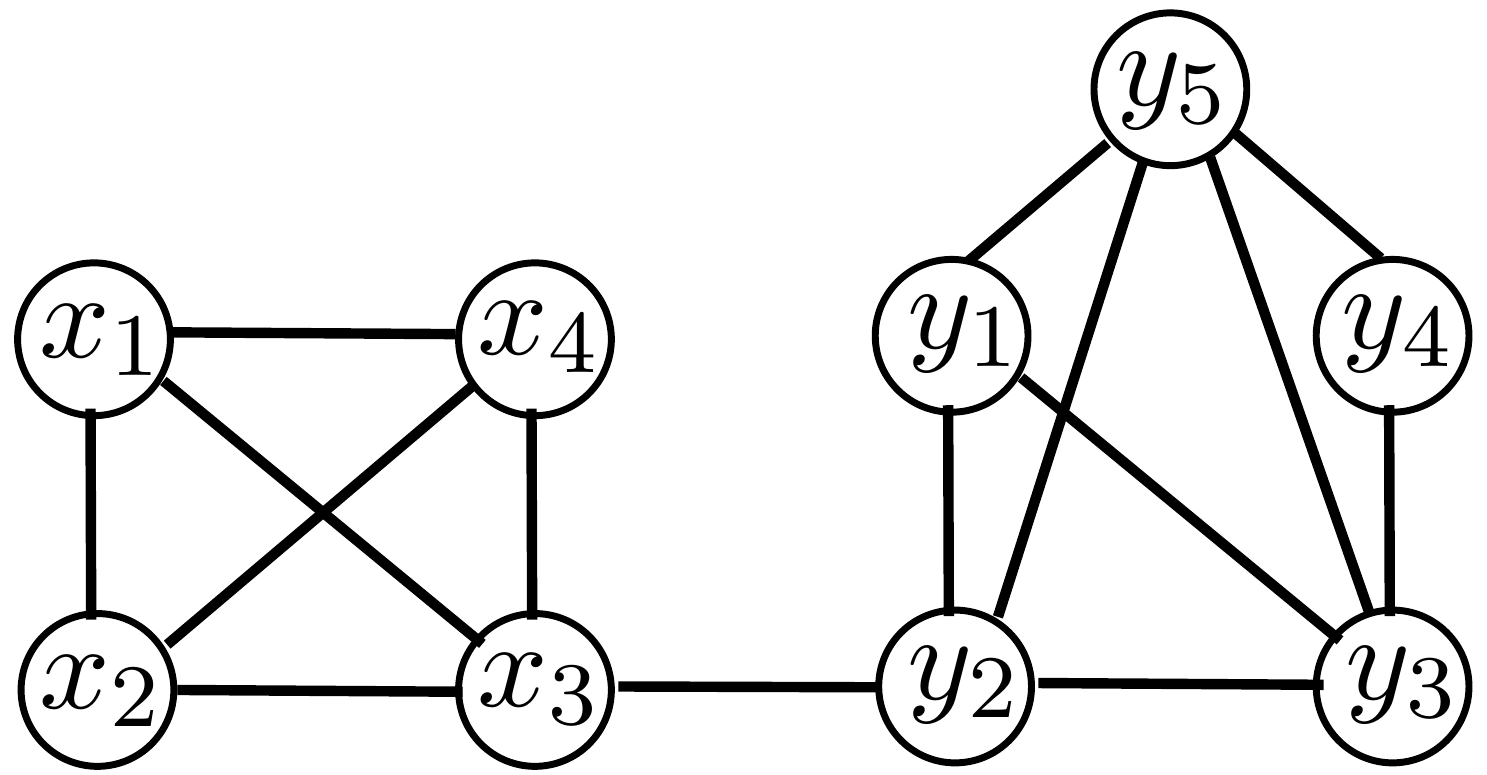}\label{fig:snapshot2}}
	\hspace{\fill}
	\subfloat[$G_3$]{\includegraphics[width= 0.21\textwidth]{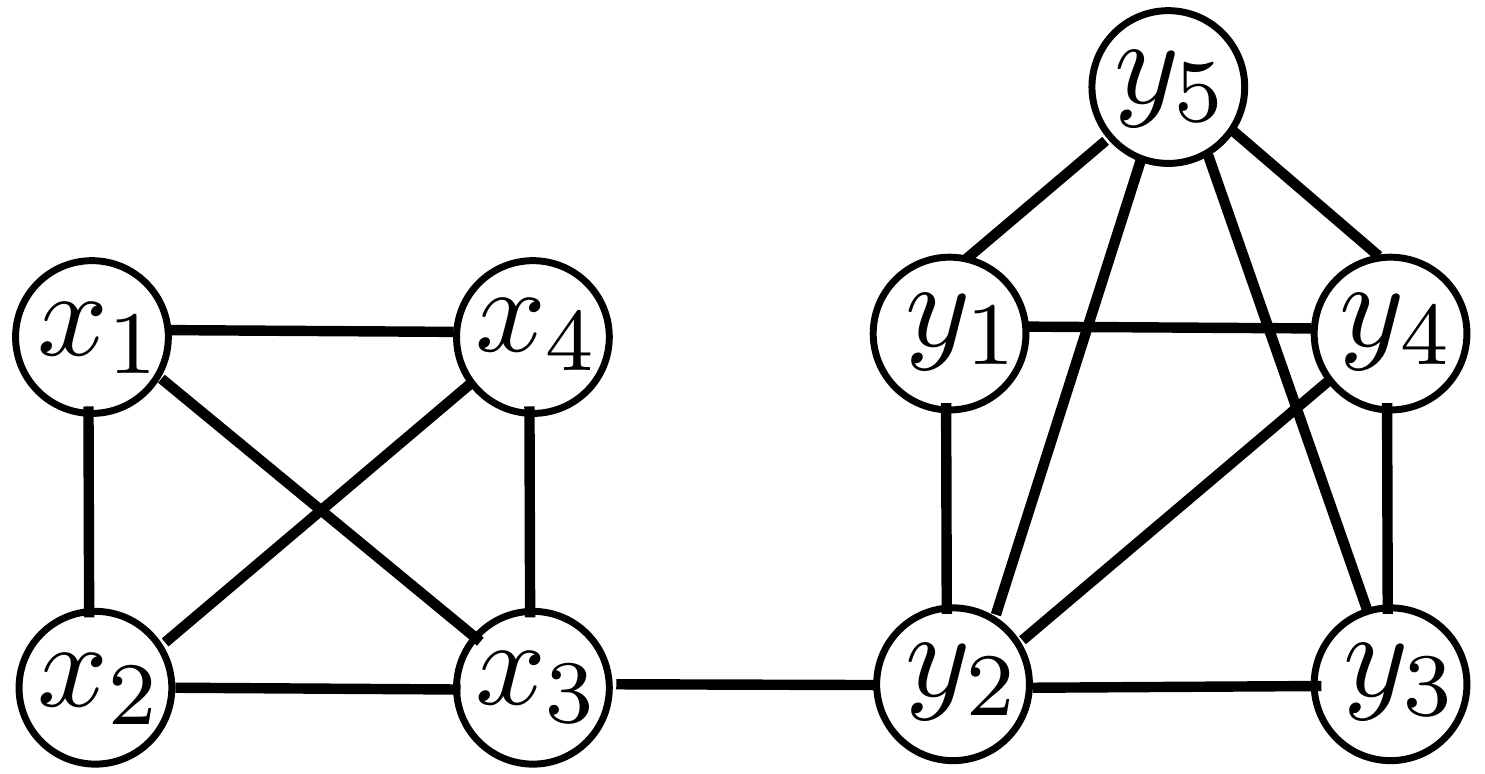} \label{fig:snapshot3}}
	\hspace{\fill}
	\subfloat[$G_4$]{\includegraphics[width= 0.21\textwidth]{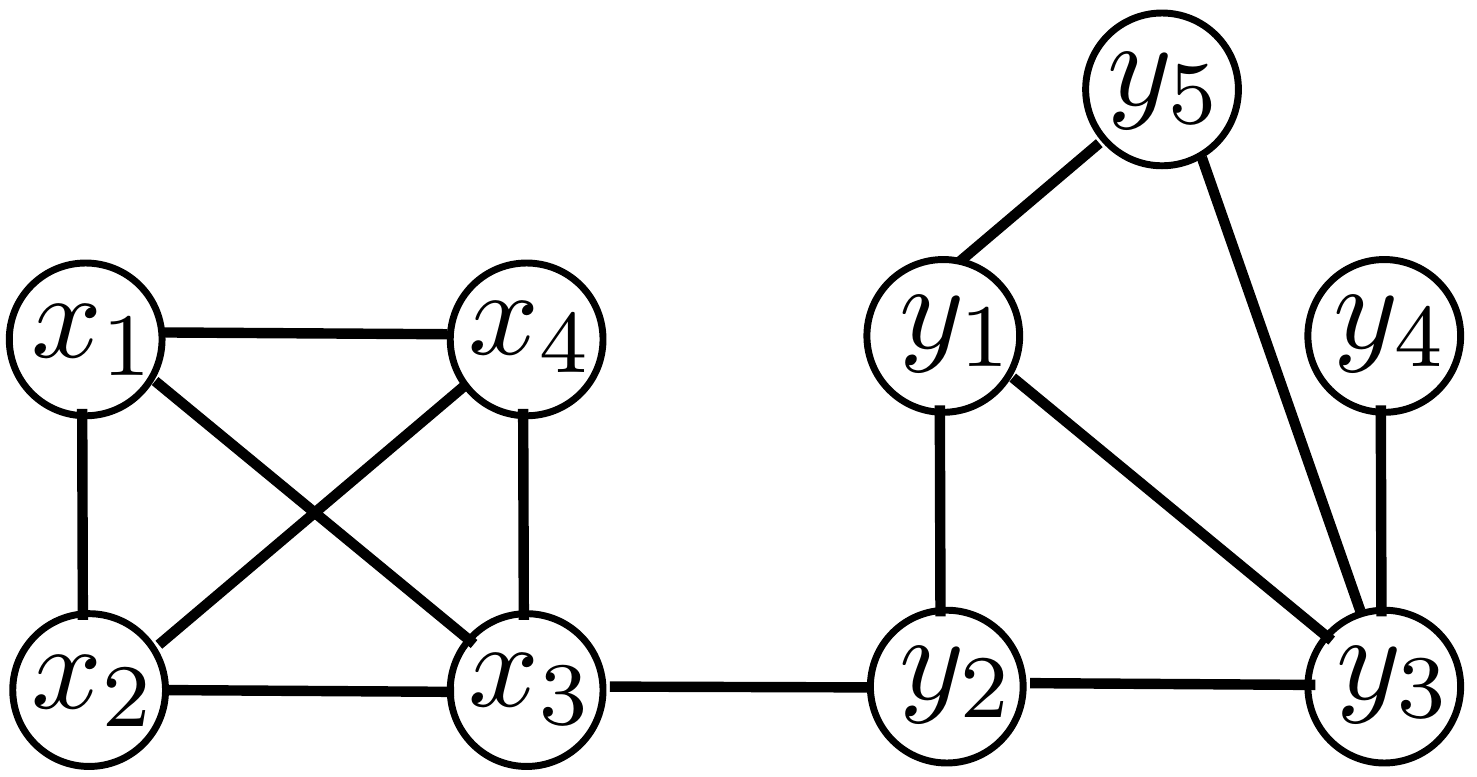} \label{fig:snapshot4}}
	\caption{\label{fig:example} A graph history $\calG=\{G_1,\ldots,G_4\}$ consisting of four snapshots.}
\end{figure*}

We will now define the notion of density for a graph history.
We start by reviewing two basic definitions of graph density for a single graph snapshot.
Given an undirected graph $G = (V, E)$ and a node  $u$ in $V$, we use $\degree(u,G)$ to denote the degree of $u$ in $G$.

The \emph{average density}, $d_a(G)$, of the graph $G$ is the average degree of the nodes in $V$:
\[
d_a(G)=\frac{1}{|V|}\sum_{u\in V}\degree(u,G) =  \frac{2|E|}{|V|}
\]
while the \emph{minimum density}, $d_m(G)$, of the graph is the minimum degree of any node in $V$:
\begin{align*}
d_m(G) = \min_{u \in V}degree(u, G).
\end{align*}

Intuitively, for a given graph, $d_m$  is defined by a single node, the one with the minimum degree, while
$d_a$ accounts for the degrees and thus the connectivity of all nodes.
For example, in Figure~\eqref{fig:snapshot1}, $d_m(G_1)=2$, while
$d_a(G_1)=10/3$.
Clearly, $d_m$ is a lower bound for $d_a$.
From now on, when the subscript of $d$ is ignored, density can be either $d_a$ or $d_m$.

We  also  define the density of a subset of nodes $S \subseteq V$ in the graph $G = (V,E)$. To this end, we use the induced subgraph $G[S] = (S, E(S))$ in $G$, where
$E(S) = \{(u,v) \in E: u \in S, \, v \in S\}$. 
We define the density $d(S,G)$ of $S$ in $G$ as $d(G[S])$. For example, again for snapshot $G_1$ in Figure~\ref{fig:example},
for $S_x=\{x_1,x_2,x_3,x_4\}$, $d_m(S_x,G_1) = d_a(S_x,G_1)=3$, while
for  $S_y=\{y_1,y_2,y_3,y_4,y_5\}$,  $d_m(S_y,G_1) =2$ and  $d_a(S_y,G_1)=16/5$.
Between $S_x$ and $S_y$, $S_x$ has the highest minimum density, whereas $S_y$ the highest average density.

We now define the density of a set of nodes $S$ on a graph history.
For this,
we need a way to aggregate the density of a set of nodes over multiple graph snapshots.

\spara{Aggregating density sequences:}
Given a graph history $\calG=\{G_1,\ldots,$ $G_\tau\}$, we will use $d(S,\calG)$ $= \{d(S,G_1),\dots,d(S,G_\tau)\}$ to denote the sequence of density values for the graph induced by the set $S$ in the graph snapshots.
We consider two definitions for an \emph{aggregation function} $g(d(S,\calG))$ that aggregates the densities over snapshots:
the first, $g_m$, computes the minimum density over all snapshots:
\[
g_m(d(S,\calG)) = \min_{G_t\in\calG} d(S,G_t),
\]
while the second, $g_a$, computes the average density over all snapshots:
\[
g_a(d(S,\calG)) = \frac {1}{\left|\calG\right|} \sum_{G_t\in\calG} d(S,G_t).
\]
Intuitively, the minimum aggregation function requires high density in each and every snapshot, while the average aggregation function looks
at the snapshots as a whole.
Again, we use $g$ to collectively refer to $g_m$ or $g_a$.
%
We can now define the \emph{aggregate density} $f$.

\vspace*{-0.005in}
\begin{definition} [{\sc Aggregate Density}]
	Given a graph history $\calG=\{G_1,\ldots ,G_\tau\}$ defined over a set of nodes $V$
	and $S\subseteq V$, we define the \emph{aggregate density} $f(S,\calG)$ to be
	$f(S,\calG)=g(d(S,\calG))$.
	Depending on the choice of the density function $d$ and the aggregation function $g$, we have the following four
	versions of  $f$:
(a) $	f_\smm(S,\calG)  =  g_m(d_m(S,\calG))$,
(b) $f_\sma(S,\calG)  =  g_m(d_a(S,\calG))$,
(c) $f_\sam(S,\calG)  =  g_a(d_m(S,\calG))$, and (d)
$	f_\saa(S,\calG)  =  g_a(d_a(S,\calG))$.
\end{definition}

\vspace*{-0.009in}
Each density definition associates different semantics with the meaning of density among nodes in a graph history.
Large values of $f_\smm(S,\calG)$
correspond to groups of nodes $S$
where each member of the group is connected with a large
number of other members of the group at each snapshot.
A node ceases to be considered a member of the group, if it loses touch with the other members
even in a single snapshot. 

Large values of $f_\sma(S,\calG)$ are achieved
for groups with high average density at each snapshot $G\in\calG$.
As opposed to $f_\smm(S,\calG)$, where the requirement is placed at each member of the group, large values of $f_\sma(S,\calG)$ are indicative
that the group $S$ has persistently high density as a whole.

The $f_\saa(S,\calG)$ metric takes large values when
the group $S$ has many connections on average; thus, $f_\saa$ is more ``loose"
both in terms of consistency over time and in terms of requirements at the individual group member level.

Lastly, $f_\sam(S,\calG)$ takes the average of the minimum
degree node at each snapshot, thus is less sensitive to the density of $S$ at a single instance.

For example, in the graph history $\calG$ in Figure~\ref{fig:example},
all aggregate densities for $S_x$ are equal to 3.
However, for  $S_y$
$f_{\saa}(S_y,\calG)$ = $31/10$,  while $f_{\sma}(S_y,\calG)$ = $12/5$.
That is, while $f_{\saa}(S_y,\calG) > f_{\saa}(S_x,\calG)$, $f_{\sma}(S_y,\calG) < f_{\sma}(S_x,\calG)$ due to the last instance.
Note also that $f_{\smm}(S_y,\calG)$ = $1$ and that this value is determined by just one node in just one snapshot, i.e., node $y_4$ in the last snapshot, while $f_{\sam}(S_y,\calG)$ = $2$.

\spara{The average graph:}
Finally, let us define the
\emph{average graph} of a graph history $\calG$ which is an edge-weighted graph where the weight of an edge is
equal to the fraction of snapshots in $\calG$ where the edge appears.

\begin{definition} [{\sc Average Graph}]
Given a graph history $\calG=\{G_1,\dots,G_\tau\}$ on a set of nodes $V$,
the average graph $\widehat{H}_\calG=(V,\widehat{E},\widehat{w})$ is a \emph{weighted, undirected} graph on the set of nodes
$V$, where $\widehat{E}$ = $V\times V$, and for each $(u,v)\in\widehat{E}$,
$\widehat{w}(u,v) = \frac{\left|G_t=(V,E_t)\in\calG\mid (u,v)\in E_t\right|}{\left|\calG\right|}.
$
\end{definition}

As usual, the degree of a node $u$
in a weighted graph is defined as: $\degree(u,\widehat{H}_\calG) =\sum_{(u,v)\in \widehat{E}}\widehat{w}(u,v)$.
The average graph performs aggregation on a per-node basis, in that, the degree of each node $u$ in $\widehat{H}_\calG$ is the average degree of $u$ in time.

With the average graph. we lose information regarding density at individual snapshots.
With some algebraic manipulation, we can prove the following lemma that shows a connection between the average graph and the $f_\saa$ density function:

\vspace*{-0.05in}
\begin{lemma} Let $\calG=\{G_1,\ldots ,G_\tau\}$ be a graph history over a set of nodes $V$ and $S$ a subset of nodes in $V$, it holds: $f_\saa(S,\calG)$ $=$  $d_a\left(\widehat{H}_\calG\left[S\right]\right)$.
\label{lemma:equivalence}
\end{lemma}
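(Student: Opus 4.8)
The plan is to prove the identity by expanding both sides into sums of edge indicators over the pairs of nodes inside $S$, and then observing that the only difference between the two expressions is the order in which we sum over snapshots and over node pairs. Since every index set here is finite, interchanging the order of summation is immediate, and the two sides collapse to the same quantity. The paper's own phrase ``some algebraic manipulation'' is apt: the statement is a linearity-of-averages identity, not a combinatorial inequality.

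First I would unfold the left-hand side. By definition, $f_\saa(S,\calG)=g_a(d_a(S,\calG))=\frac{1}{\tau}\sum_{t=1}^{\tau}d_a(S,G_t)$, and each term is the average density of an induced subgraph, $d_a(S,G_t)=d_a(G_t[S])=\frac{1}{|S|}\sum_{u\in S}\degree(u,G_t[S])$, where $\degree(u,G_t[S])$ counts only those neighbors of $u$ that lie in $S$ at snapshot $t$. Writing this induced degree as a sum of edge indicators, $\degree(u,G_t[S])=\sum_{v\in S}\mathbf{1}[(u,v)\in E_t]$ (the diagonal terms $v=u$ contribute $0$ since there are no self-loops), puts the whole left-hand side in the form of a triple sum over $t$, $u$, and $v$, carrying the prefactor $\tfrac{1}{\tau|S|}$.

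Next I would interchange the summation over snapshots with the summation over the pairs $(u,v)\in S\times S$. This is the key (and essentially the only) step, and it is routine because the index sets are finite and independent. Pulling the factor $1/\tau$ inside, the innermost sum becomes $\frac{1}{\tau}\sum_{t=1}^{\tau}\mathbf{1}[(u,v)\in E_t]$, which is exactly the edge weight $\widehat{w}(u,v)$ of the average graph. Summing $\widehat{w}(u,v)$ over $v\in S$ then gives the induced degree $\degree(u,\widehat{H}_\calG[S])$ of $u$ in $\widehat{H}_\calG[S]$, and averaging over $u\in S$ yields $d_a(\widehat{H}_\calG[S])$ by definition. This matches the right-hand side and completes the argument.

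The main thing to watch is bookkeeping rather than any genuine obstacle: I must keep the two normalization factors $1/\tau$ and $1/|S|$ in their correct places, and I must use the \emph{induced} degree $\degree(u,G_t[S])$, counting only neighbors within $S$, rather than the full degree of $u$ in $G_t$, so that on the right the weights $\widehat{w}(u,v)$ are summed only over $v\in S$. Because the edge set $\widehat{E}=V\times V$ includes pairs of weight zero, restricting the degree sum to $S$ is harmless and the induced subgraph $\widehat{H}_\calG[S]$ is well defined. No cancellation or approximation is needed; the equality follows directly from the linearity of the averages and a Fubini-style reordering of finite sums.
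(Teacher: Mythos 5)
Your proof is correct and is precisely the ``algebraic manipulation'' the paper alludes to without spelling out: expanding $f_\saa(S,\calG)$ into a triple sum of edge indicators, swapping the finite sums over snapshots and node pairs, and recognizing $\frac{1}{\tau}\sum_{t}\mathbf{1}[(u,v)\in E_t]$ as $\widehat{w}(u,v)$ so that the expression collapses to $d_a\left(\widehat{H}_\calG\left[S\right]\right)$. Your care with the induced degrees (summing weights only over $v\in S$) and the two normalization factors is exactly where the bookkeeping matters, and nothing further is needed.
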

\section{The {\Large{\problembff}} Problem}
\label{sec:bff}

In this section, we introduce the {{\problembff}} problem, we study its hardness and propose appropriate algorithms.

\subsection{Problem definition}

Given the snapshots of a graph history ${\calG}$, our goal is to locate the Best Friends For Ever ({\problembff}),
that is, to identify a subset of nodes of $V$ such that
these nodes remain densely connected
with each other in \emph{all} snapshots of {\calG}. Formally:

\begin{problem}[The Best Friends Forever ({\problembff}) Problem]
\label{pr:bff}
Given a graph history ${\calG}$
and an aggregate density function $f$, find a subset of nodes $S\subseteq V$,
such that $f(S,\calG)$ is maximized.
\end{problem}

By considering the four choices for the aggregate density  function $f$, we have four variants of the {\problembff} problem. Specifically,
$f_\smm$, $f_\sma$, $f_\sam$ and $f_\saa$ give rise to problems:
{\problemmm}, {\problemma}, {\problemam} and {\problemaa} respectively.

\subsection{{\large {\problembff}} algorithms}\label{sec:bffalgos}


We now introduce a generic algorithm for the {\problembff} problem. The algorithm (shown in Algorithm~\ref{algo:bff})
is a ``greedy-like" algorithm inspired by a popular algorithm for the densest subgraph problem on a static graph~\cite{DBLP:conf/swat/AsahiroITT96,DBLP:conf/approx/Charikar00}.
We use $\calG[S]$  $= \{G_1[S],\ldots,G_\tau[S]\}$ to denote the sequence of
the induced subgraphs of the set of nodes $S$.
The algorithm
starts with a set of nodes $S_0$ consisting of all nodes $V$, and then it performs $n-1$ steps, where at each step $i$ it produces a set $S_i$ by removing one of the nodes in the set $S_{i-1}$. It then finds the set $S_i$ with the maximum aggregate density $f(S,\calG)$.

\begin{algorithm}
	\caption{\small The {\algobff}  algorithm.}
	\label{algo:bff}
	\begin{algorithmic}[1]
	\small
		\Statex{{\bf Input:} Graph history $\calG = \{G_1,\dots,G_\tau\}$; aggregate density function $f$}
		\Statex{{\bf Output:} A subset of nodes $S$}
		\vspace{0.1cm}
		\hrule
		\vspace{0.1cm}
        \State $S_0 = V$
		\For{$i = 1,\dots, n-1$}
            \State $v_i=\displaystyle\arg\min_{v\in S_{i-1}}\textit{score}\left(v,\calG\left[S_{i-1}\right]\right)$
            \label{line:select}
			\State $S_i = S_{i-1} \setminus \{v_i\}$
		\EndFor
		\State \textbf{return} $\arg\displaystyle\max_{i=0\ldots n-1}f(S_i,\calG)$\label{line:evaluation}
        \label{line:target}
	\end{algorithmic}
\end{algorithm}

The $\algobff$ algorithm forms the basis for the algorithms we propose for the four variants of the {\problembff} problem.
Interestingly, by defining appropriate  scoring functions,  $\textit{score}\left(v,\calG\left[S\right]\right)$,
 (used in line~\ref{line:select} to select which node to remove), we can get efficient algorithms for each of the variants.

\begin{algorithm}[ht!]
	\caption{\small The score$_m$ algorithm.}
	\label{algo:score_m}
	\begin{algorithmic}[1]
		\small
		\Statex{{\bf Input:} Graph history $\calG = \{G_1,\dots,G_\tau\}$}
		\Statex{{\bf Output:} Node with the minimum score$_m$}
		\vspace{0.1cm}
		\hrule
		\vspace{0.1cm}
        \State ${\cal L}_{t}$[$d$] $\leftarrow$ list of nodes with degree $d$ in $G_t$\\

        \Procedure{ScoreAndUpdate}{$ $}
        \For{$t = 1,\dots,\tau$}
        	\State $dmin_t$ $\leftarrow$ smallest $d$ s.t. ${\cal L}_t[d] \neq \emptyset$
		\EndFor
		\State $score_m$ = $\displaystyle\min_{t = 1,\dots,\tau}{dmin_t}$
		\State $t'$ = $\displaystyle\arg\min_{t = 1,\dots,\tau}{dmin_t}$
		\State $u$ = ${\cal L}_{t'}[score_m]$.get()
		\For{\textbf{each} $G_t \in \calG$}
			\State ${\cal L}_{t}$[$score_m$].remove($u$)
			\For{\textbf{each} $(u, v) \in E_t$}
				\State ${\cal L}_{t}$[degree($v$, $G_t$)].remove($v$)
				\State $E_t = E_t - (u, v)$ \textit{\,\,\,\,// update $\displaystyle degree_{v \in V}(v, G_t)$}
				\State ${\cal L}_{t}$[degree($v$, $G_t$)].add($v$)
			\EndFor			
		\EndFor	
		\State $V = V\setminus \{u\}$
		\State{\textbf{return} u}
		\EndProcedure
	\end{algorithmic}
\end{algorithm}

\subsubsection{Solving {\large {\problemmm}}}
For the {\problemmm} problem, we define the score for a node $v$ in $S$, $\textit{score}_m$, as the minimum degree of $v$ in the sequence $\calG\left[S\right]$. That is,
\begin{equation*}\label{eq:mm1}
\textit{score}_m\left(v,\calG\left[S\right]\right) = \min_{G_t\in \calG}\degree\left(v,G_t\left[S\right]\right).
\end{equation*}

At the $i$-th iteration of the $\algobff$ algorithm, we select the node $v_i$ with the minimum $\textit{score}_m$ value.
We call this instantiation of the {\algobff} algorithm  {\algobffmm}.
Below we prove that {\algobffmm} provides the optimal solution
to the {\problemmm} problem.

\begin{proposition}
The {\problemmm} problem can be solved optimally in polynomial time using the
{\algobffmm} algorithm.
\label{prop:mm}
\end{proposition}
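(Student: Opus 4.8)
The plan is to first rewrite the objective so that the greedy peeling becomes the obvious candidate for optimality, and then run a standard exchange argument. Unfolding the definitions and commuting the two minima,
\[
f_\smm(S,\calG)=\min_{G_t\in\calG}\min_{v\in S}\degree(v,G_t[S])=\min_{v\in S}\textit{score}_m(v,\calG[S]),
\]
so the value of any candidate set $S$ is exactly the smallest per-node score inside it. In particular $f_\smm(S_{i-1},\calG)$ equals the score of the node $v_i$ that {\algobffmm} deletes at step $i$, since $v_i$ is chosen to minimize $\textit{score}_m$ over $S_{i-1}$. The second ingredient I would establish is \emph{monotonicity}: if $S'\subseteq S$ and $v\in S'$ then $\degree(v,G_t[S'])\le\degree(v,G_t[S])$ in every snapshot, and therefore $\textit{score}_m(v,\calG[S'])\le\textit{score}_m(v,\calG[S])$; removing nodes can only lower the scores of the survivors.

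Next I would carry out the exchange. Let $S^*$ be an optimal set with value $\lambda^*=f_\smm(S^*,\calG)$, so that $\textit{score}_m(v,\calG[S^*])\ge\lambda^*$ for every $v\in S^*$. Following the greedy trajectory, let $i^*$ be the first iteration whose deleted node $v_{i^*}$ lies in $S^*$ (such a step exists whenever $|S^*|\ge 2$, and the singleton case is trivial since then $\lambda^*=0$). Every node removed before step $i^*$ is outside $S^*$, hence $S^*\subseteq S_{i^*-1}$. Applying monotonicity to $v_{i^*}\in S^*\subseteq S_{i^*-1}$ yields
\[
\textit{score}_m(v_{i^*},\calG[S_{i^*-1}])\ge\textit{score}_m(v_{i^*},\calG[S^*])\ge\lambda^*,
\]
and since $v_{i^*}$ is the minimum-score node of $S_{i^*-1}$, the left-hand side equals $f_\smm(S_{i^*-1},\calG)$. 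Thus $f_\smm(S_{i^*-1},\calG)\ge\lambda^*$.

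To close the argument I would note that every set $S_i$ examined on line~\ref{line:evaluation} is a feasible subset of $V$, so $f_\smm(S_i,\calG)\le\lambda^*$ by optimality of $\lambda^*$; combined with the bound above for $i=i^*-1$, the returned value $\max_i f_\smm(S_i,\calG)$ equals $\lambda^*$ and the witnessing set $S_{i^*-1}$ is optimal. Polynomial running time follows from the $n-1$ peeling steps, each realized by the bucket-based \textsc{ScoreAndUpdate} routine that maintains the per-snapshot degree lists and extracts the global minimum score incrementally. The only delicate point — and the step I expect to be the real obstacle — is justifying that peeling by $\textit{score}_m$ is the correct move: I must verify both that the minima commute (so that $f_\smm$ reduces to a single minimization with a well-defined per-node score) and that monotonicity holds, since together these guarantee greedy never discards a node that an optimal solution keeps before that node's score has already dropped to $\lambda^*$.
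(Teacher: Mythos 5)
Your proof is correct and follows essentially the same argument as the paper's: an exchange argument that looks at the first iteration $i^*$ at which a node of an optimal solution $S^*$ is removed, uses $S^*\subseteq S_{i^*-1}$ together with monotonicity of $\textit{score}_m$ under node removal to conclude $f_\smm(S_{i^*-1},\calG)\geq f_\smm(S^*,\calG)$, and notes that the algorithm evaluates $S_{i^*-1}$. Your write-up is somewhat more complete than the paper's (explicitly commuting the two minima, handling the $|S^*|=1$ case, and closing with the feasibility upper bound), but it is the same proof.
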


\begin{proof}
Let $i$ be the iteration
of the {\algobffmm} algorithm, where for the first time, a node that belongs to an optimal solution
$S^\ast$ is selected to be removed.
Let $v_i$ be this node.
 Then clearly, $S^\ast\subseteq S_{i-1}$ and by the fact that at every iteration we remove edges from the graphs we have that
\[
\textit{score}_m\left(v_i,\calG\left[S_{t-1}\right]\right)\geq \textit{score}_m\left(v_i,\calG\left[S^\ast\right]\right).
\]
Since $v_i$ is the node we pick at iteration $i$, every node $u\in S_{i-1}$
satisfies:
\begin{center}
$\min_{G_t\in\calG}\degree(u,G_t[S_{i-1}]) $ = $ \textit{score}_m\left(u,\calG\left[S_{i-1}\right]\right)$ $\geq$\\
  $\textit{score}_m\left(v_i,\calG\left[S_{i-1}\right]\right) $
$ \geq  \textit{score}_m\left(v_i,\calG\left[S^\ast\right]\right)$.
\end{center}

Since this is true for every node $u$, this means that $S_{i-1}$ is indeed optimal and that our algorithm will find it.
\end{proof}

The running time of {\algobffmm} is $O(n\tau + M)$, where
$n=|V|$, $\tau$ the number of snapshots in the history graph and $M=m_1+m_2+\ldots + m_\tau$ the total number
of edges that appear in all snapshots.
\begin{algorithm}[ht!]
	\caption{\small The score$_a$ algorithm.}
	\label{algo:score_a}
	\begin{algorithmic}[1]
		\small
		\Statex{{\bf Input:} Graph history $\calG = \{G_1, \dots G_\tau\}$}
		\Statex{{\bf Output:} Node with the minimum score$_a$}
		\vspace{0.1cm}
		\hrule
		\vspace{0.1cm}
		\State  $\widehat{H}_\calG \leftarrow$ construct the average graph of $\calG$
        \State ${\cal L}[d]$ $\leftarrow$ list of nodes with degree $d$ in $\widehat{H}_\calG$\\

        \Procedure{ScoreAndUpdate}{$ $}
        	\State $score_a$ $\leftarrow$ smallest $d$ s.t. ${\cal L}[d] \neq \emptyset$
        	\State $u$ = ${\cal L}[score_a]$.get()
			\State ${\cal L}$[$score_a$].remove($u$)
			\For{\textbf{each} $(u, v) \in \widehat{E}$}
				\State ${\cal L}$[degree($v$, $\widehat{H}_\calG)$].remove($v$)
				\State $\widehat{E} = \widehat{E} - (u, v)$ \textit{\,\,\,\,// update $\displaystyle{degree_{v \in V}(v, \widehat{H}_\calG)}$}
				\State ${\cal L}$[degree($v$, $\widehat{H}_\calG$)].add($v$)%
		\EndFor	
		\State  $V = V \setminus \{u\}$
		\State{\textbf{return} u}
		\EndProcedure
	\end{algorithmic}
\end{algorithm}
The node with the minimum  $\textit{score}_m$ value is computed by the procedure {\sc ScoreAndUpdate} shown in Algorithm \ref{algo:score_m}, which also removes the node and its edges from all snapshots.
For each snapshot $G_t$, we keep the list of nodes
with degree $d$ (line 1 in Algorithm \ref{algo:score_m}); these lists can be constructed in time $O(n\tau)$.
Given these lists, the time required to find
the node with the minimum $\textit{score}_m$ is $O(\tau)$ (lines 4--8).
Now in all snapshots, the neighbors of the removed node
need to be moved from their position in the $\tau$ lists (lines 9--14); the degree of every neighbor of the removed node is decreased by one.
Throughout the execution of the algorithm at most $O(M)$ such moves can happen.
Therefore, the total running time of  {\algobffmm} is $O(n\tau + M)$.
Note that an algorithm that iteratively removes from a graph $G$ the node with the minimum degree  was first studied in \cite{DBLP:conf/swat/AsahiroITT96} and shown to compute a 2-approximation of the densest subgraph problem for the $d_a(G)$ density in \cite{DBLP:conf/approx/Charikar00} and  the optimal  for the
$d_m(G)$ density in \cite{DBLP:conf/kdd/SozioG10}.

\subsubsection{Solving {\large {\problemaa}}}
To solve the {\problemaa} problem,
we shall use the average graph $\widehat{H}_{\calG}$  of $\calG$.
Lemma~\ref{lemma:equivalence} shows that $f_\saa(S,\calG) = d_a\left(\widehat{H}_\calG\left[S\right]\right)$.
Thus, based on the results of Charikar~\cite{DBLP:conf/approx/Charikar00} and Goldberg~\cite{goldberg1984finding}, we conclude that:

\begin{proposition}
The {\problemaa} problem can be solved optimally in polynomial time.
\end{proposition}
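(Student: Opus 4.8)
The plan is to reduce \problemaa to the classical densest-subgraph problem on a single weighted graph and then invoke a known exact polynomial-time solver for that problem. First I would apply Lemma~\ref{lemma:equivalence}, which asserts $f_\saa(S,\calG)=d_a(\widehat{H}_\calG[S])$. This immediately reframes the optimization: maximizing $f_\saa(S,\calG)$ over all $S\subseteq V$ is identical to finding the subset $S$ that maximizes the average density $d_a(\widehat{H}_\calG[S])$ of the induced subgraph in the single average graph $\widehat{H}_\calG$. Since $\widehat{H}_\calG$ has $n$ nodes and at most $M$ edges of nonzero weight, each weight being a fraction with denominator $|\calG|=\tau$, the average graph can be constructed explicitly in polynomial time.

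Next I would observe that maximizing $d_a(\widehat{H}_\calG[S])$ is exactly the weighted densest-subgraph problem, since $d_a(\widehat{H}_\calG[S]) = \frac{2}{|S|}\sum_{u,v\in S}\widehat{w}(u,v)$, i.e.\ twice the total edge weight inside $S$ divided by $|S|$. The key step is to recall that this problem admits an exact polynomial-time solution. Goldberg's maximum-flow formulation~\cite{goldberg1984finding} performs a parametric search over a guessed density value $\lambda$, building for each guess a flow network whose minimum $s$--$t$ cut certifies whether a subgraph of density exceeding $\lambda$ exists; equivalently, Charikar's linear-programming relaxation~\cite{DBLP:conf/approx/Charikar00} is integral and yields the optimum directly. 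I would invoke either of these to establish optimality of the returned set.

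The one point that requires care --- and the main obstacle --- is that these classical results are usually stated for \emph{unweighted} graphs, whereas $\widehat{H}_\calG$ carries rational edge weights. I would handle this by scaling: multiplying every weight by $\tau$ turns $\widehat{H}_\calG$ into an integer-weighted (equivalently, multi-) graph without moving the maximizer, since the objective scales uniformly by $\tau$. Goldberg's flow construction extends verbatim to integer edge capacities, and the set of candidate density values is polynomially bounded --- each attainable density has the form $a/(\tau|S|)$ with $a$ an integer and $|S|\le n$, so distinct values are separated by at least $1/(\tau n)^2$ --- which guarantees that the parametric (binary) search over $\lambda$ terminates after $O(\log(\tau n))$ min-cut computations. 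Combining the polynomial-time reduction with the polynomial-time densest-subgraph solver then yields an exact polynomial-time algorithm for \problemaa, completing the argument.
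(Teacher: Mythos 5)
Your proposal is correct and follows essentially the same route as the paper: the paper likewise invokes Lemma~\ref{lemma:equivalence} to recast {\problemaa} as a densest-subgraph problem on the weighted average graph $\widehat{H}_\calG$ and then appeals directly to the exact polynomial-time algorithms of Goldberg~\cite{goldberg1984finding} and Charikar~\cite{DBLP:conf/approx/Charikar00}. The only difference is that you spell out the handling of rational edge weights (scaling by $\tau$ and bounding the parametric search), a detail the paper leaves implicit.
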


Although there exists a polynomial-time optimal algorithms for {\problemaa}, the computational complexity of these algorithms (e..g., $O(|V||\widehat{E}|^2)$ for the case of the max-flow algorithm in~\cite{goldberg1984finding}), makes them hard to use for large-scale real graphs.  Therefore, instead of these algorithm we use the {\algobff} algorithm, where  we define the score of a node $v$ in $S$, $\textit{score}_a$, to be equal to its average degree of $v$ in graph history ${\calG}[S]$. That is,
\begin{equation*}\label{eq:aa1}
\textit{score}_a\left(v,\calG\left[S\right]\right) = \frac{1}{\left|\calG\right|}\sum_{G_t\in\calG}\degree\left(v,G_t\left[S\right]\right).
\end{equation*}
At the $i$-th iteration, we select the node $v_i$ with the \emph{minimum average degree in $\calG$}.
We will refer to this instantiation of the {\algobff}, as {\algobffaa}.
Using Lemma~\ref{lemma:equivalence} and the results of Charikar~\cite{DBLP:conf/approx/Charikar00}
we have the following:

\begin{proposition}
{\algobffaa} is a $\frac 12$-approximation algorithm for the {\problemaa} problem.
\end{proposition}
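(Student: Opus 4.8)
The plan is to reduce the claim to the known guarantee of Charikar's greedy peeling algorithm on a weighted graph, using the equivalence established in Lemma~\ref{lemma:equivalence}. By that lemma, maximizing $f_\saa(S,\calG)$ over $S\subseteq V$ is exactly the problem of finding the subset $S$ maximizing $d_a(\widehat{H}_\calG[S])$, i.e., the densest-subgraph problem (under average density) on the weighted average graph $\widehat{H}_\calG$. So my first step is to recast {\problemaa} entirely in terms of $\widehat{H}_\calG$, after which any guarantee for average-density densest-subgraph carries over directly.

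The second step is to verify that $\algobffaa$, viewed through this lens, is precisely Charikar's algorithm run on $\widehat{H}_\calG$. Concretely, I would show that for any $S$ the score used to pick the node to delete coincides with the weighted degree in the induced average graph:
\[
\textit{score}_a\left(v,\calG\left[S\right]\right)=\frac{1}{\tau}\sum_{G_t\in\calG}\degree\left(v,G_t\left[S\right]\right)=\sum_{u\in S}\widehat{w}(v,u)=\degree\left(v,\widehat{H}_\calG\left[S\right]\right).
\]
The middle equality is a short double-counting argument: expanding $\widehat{w}(v,u)$ as the fraction of snapshots containing $(v,u)$ and swapping the order of summation turns the sum of edge weights incident to $v$ in $\widehat{H}_\calG[S]$ into the average over $t$ of the unweighted degrees $\degree(v,G_t[S])$. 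Hence deleting the minimum-$\textit{score}_a$ node is identical to deleting the minimum-weighted-degree node of $\widehat{H}_\calG[S]$, and the set returned in line~\ref{line:evaluation} is, by Lemma~\ref{lemma:equivalence}, the greedily produced subgraph of largest $d_a$ value.

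Finally, I would invoke Charikar's analysis: the greedy procedure that repeatedly deletes a minimum-degree vertex and reports the densest of the produced subgraphs is a $\frac12$-approximation for the maximum average-density subgraph problem. The only care needed is that $\widehat{H}_\calG$ is edge-weighted, so I must use the weighted form of the bound (weighted degrees and weighted edge mass in place of their unweighted counterparts); this holds verbatim because the charging argument compares, at the moment the optimal set $S^\ast$ is first touched, the degree of the removed vertex against $d_a(\widehat{H}_\calG[S^\ast])$, and never relies on integrality of the weights. Combining the three steps yields the claimed $\frac12$-approximation.

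The main obstacle is the second step: pinning down the exact correspondence between the node-deletion rule of $\algobffaa$ (phrased over the snapshot sequence) and the minimum-degree rule on the weighted average graph, and confirming that Charikar's guarantee either is stated for, or extends cleanly to, the weighted setting rather than only to simple unweighted graphs.
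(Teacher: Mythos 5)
Your proof is correct and takes essentially the same route as the paper's: both reduce {\problemaa} to the densest-subgraph problem (under average density) on the average graph via Lemma~\ref{lemma:equivalence}, and then invoke Charikar's $\frac{1}{2}$-approximation guarantee for greedy minimum-degree peeling on a weighted graph. The only difference is that you spell out two steps the paper treats as immediate, namely the identity $\textit{score}_a\left(v,\calG\left[S\right]\right)=\degree\left(v,\widehat{H}_\calG\left[S\right]\right)$ and the fact that Charikar's bound extends to edge-weighted graphs.
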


\begin{proof}
It is easy to see that {\algobffaa}  removes
the node with the minimum density in $\widehat{H}_{\calG}$.
Charikar~\cite{DBLP:conf/approx/Charikar00}  has shown that an algorithm that
iteratively removes from a graph the node with minimum density
provides a $\frac 12$-approximation for finding the subset of nodes that maximizes the average density on a single (weighted) graph snapshot.
Given the equivalence we established in Lemma~\ref{lemma:equivalence},
{\algobffaa} is also a $\frac 12$-approximation algorithm for {\problemaa}.
\end{proof}

We show the steps for finding the node with  the minimum  $\textit{score}_a$ value
in Algorithm \ref{algo:score_a} that uses lists of nodes with degree $d$ in the average graph to achieve an $O(n\tau + M)$ total running time for {\algobffaa}.

\subsubsection{Solving {\large{\problemma}} and  {\large {\problemam}}}

We consider the application of {\algobffmm} and {\algobffaa} algorithms for the two problems.
In the following propositions, we prove that the two algorithms give a poor approximation ratio for both problems.
Recall that all our problems are maximization problems, and, therefore, the lower the approximation ratio, the worse the performance of the algorithm.

\begin{proposition}\label{prop:prop1}
	The approximation ratio of algorithm {\algobffmm} for the {\problemam} problem is $O\left(\frac{1}{n}\right)$ where $n$ is the number of nodes.
\end{proposition}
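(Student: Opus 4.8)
The plan is to prove this negative result by constructing an explicit family of graph histories on which \algobffmm\ returns a solution whose $f_\sam$ value is a factor $\Omega(n)$ below the optimum. The intuition I would exploit is the mismatch between what the algorithm optimizes and the objective: \algobffmm\ removes the node minimizing $\textit{score}_m$, the \emph{minimum over time} of a node's degree, so a single snapshot in which a node is poorly connected makes it look worthless; by contrast $f_\sam$ \emph{averages} the per-snapshot minimum degree, so one bad snapshot barely hurts a set. A good set for \problemam\ can therefore consist of nodes that are excellent in most snapshots but terrible in one, precisely the nodes \algobffmm\ discards first.

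Concretely, I would take $\tau=2$ snapshots on $n$ nodes partitioned into a large clique candidate $A=\{a_1,\dots,a_p\}$ with $p=n-2$ and a two-node \emph{anchor} $B=\{b_1,b_2\}$. In $G_1$ the set $A$ is a clique and $b_1b_2$ is an edge; in $G_2$ only the edge $b_1b_2$ is present and $A$ is edgeless, with no edges between $A$ and $B$ in either snapshot. Then $\textit{score}_m(a_i)=\min(p-1,0)=0$ for every clique node, while $\textit{score}_m(b_j)=\min(1,1)=1$. Hence \algobffmm\ deletes all of $A$ before it ever touches $B$.

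The key steps are then the following. First, trace the removal path and observe that every set $S_i$ the algorithm inspects contains the anchor $B$, so its minimum degree is at most $1$ in $G_1$ and is $0$ as long as any isolated $a_i$ remains, giving $f_\sam(S_i)\le 1$; the best value on the path is attained at $S_i=B$ with $f_\sam(B)=1$. In particular the full set $S_0=V$, which is always evaluated, also has $f_\sam(V)=\tfrac12(1+0)=\tfrac12$. Second, exhibit a far better feasible set: $A$ itself gives $f_\sam(A)=\tfrac12\big((p-1)+0\big)=\tfrac{p-1}{2}$, so the optimum is at least $\tfrac{n-3}{2}$. Combining the two, the ratio is at most $\tfrac{1}{(n-3)/2}=\tfrac{2}{n-3}=O(1/n)$.

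The main obstacle, and the point that needs the most care, is guaranteeing that the algorithm's path \emph{never} passes through a high-value set, not merely that it ends badly. This is exactly the role of the anchor $B$: it must be simultaneously \emph{sticky}, i.e. its $\textit{score}_m$ strictly exceeds that of every clique node so $B$ survives until last, and \emph{poisonous}, i.e. its presence forces a low per-snapshot minimum degree on every intermediate set. I would verify this by checking all set sizes along the path, noting that $A$ is never isolated (because $B$ is removed only after $A$ is exhausted) and that once a single $a_i$ stands alone in $G_1$ its degree drops to $0$, so no partial clique ever yields high $f_\sam$. Establishing these two properties of $B$ rigorously is the crux; the remaining degree bookkeeping is routine.
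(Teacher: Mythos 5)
Your proof is correct and takes essentially the same approach as the paper's: an adversarial instance in which a large clique is present in all but one snapshot (so its nodes have $\textit{score}_m=0$ and are stripped first) while a persistently connected edge survives, forcing {\algobffmm} to return a set of value $O(1)$ against an optimum of $\Theta(n)$. The only cosmetic differences are that the paper attaches the anchor edge to the clique via a pendant vertex and keeps the number of snapshots $\tau$ general, whereas you use a disjoint anchor edge and $\tau=2$; both give the same $O\left(\frac{1}{n}\right)$ ratio.
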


\begin{proof}
In the Appendix.
\end{proof}

\begin{proposition}\label{prop:prop2}
The approximation ratio of algorithm {\algobffaa} for the {\problemam} problem is $O\left(\frac{1}{n}\right)$ where $n$ is the number of nodes.
\end{proposition}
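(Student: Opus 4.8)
The plan is to prove the bound by exhibiting an explicit family of graph histories $\calG$ on $n$ nodes for which $\algobffaa$ reports a set whose $f_\sam$ value is a factor $\Omega(n)$ below the {\problemam} optimum. The observation driving the construction is that $\algobffaa$ commits, along its peeling sequence $S_0\supset S_1\supset\cdots$, to the set that maximizes the \emph{average} aggregate density $f_\saa$; such a set can be an excellent $f_\saa$-solution yet a terrible $f_\sam$-solution, because high average density tolerates a vertex that is poorly connected in a few snapshots, whereas $f_\sam$ is destroyed by the minimum-degree vertex of \emph{each} snapshot. So I would build an instance whose $f_\saa$-optimum is ``poisoned'' for $f_\sam$, while a different, lower-average-density set attains a large $f_\sam$.

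Concretely, I would take $\tau=n$ snapshots over $V=\{v_1,\ldots,v_n\}$ and let $G_t$ be the complete graph on $V\setminus\{v_t\}$ with $v_t$ isolated (a ``rotating'' isolated vertex). The first step is to compute the average graph: every pair $(v_i,v_j)$ is present in all snapshots except the two with $t\in\{i,j\}$, so $\widehat{H}_\calG$ is the complete graph with uniform weight $(n-2)/n$. By Lemma~\ref{lemma:equivalence}, $f_\saa(T)=d_a(\widehat{H}_\calG[T])=(|T|-1)\tfrac{n-2}{n}$ depends only on $|T|$ and is strictly increasing in it; hence, irrespective of the peeling order, the maximizer of $f_\saa$ over the chain is the full set $S_0=V$. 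Thus $\algobffaa$ reports $\widehat{S}=V$, and since every snapshot contains an isolated vertex, $f_\sam(V)=\tfrac{1}{n}\sum_t\min_u\degree(u,G_t)=0$ (replacing the isolated vertex by a degree-one vertex in each snapshot turns this into a clean constant $1$, should a strictly positive ratio be preferred).

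It then remains to lower-bound the optimum. For a subset $T$ of size $s$, snapshot $t$ contributes $0$ when $v_t\in T$ and contributes $s-1$ (the set induces $K_s$) when $v_t\notin T$, so $f_\sam(T)=\tfrac{1}{n}(n-s)(s-1)$, which is maximized near $s\approx n/2$ at value $\tfrac{(n-1)^2}{4n}=\Theta(n)$. Therefore the {\problemam} optimum is $\Theta(n)$ while $\algobffaa$ attains $O(1)$, giving the claimed ratio $O(1/n)$. The step I expect to be the crux is pinning down \emph{exactly} which set $\algobffaa$ returns: this is precisely what forces the instance to be highly symmetric, so that $\widehat{H}_\calG$ is weighted-regular and $f_\saa$ is monotone in cardinality along the peeling order, guaranteeing the algorithm locks onto the poisoned full set $V$ instead of the balanced $n/2$-subset that realizes the optimum.
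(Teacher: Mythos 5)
Your computations are all correct (the average graph has uniform weight $\frac{n-2}{n}$, $f_\saa(T)=(|T|-1)\frac{n-2}{n}$, and $f_\sam(T)=\frac{1}{n}(n-s)(s-1)$ for $|T|=s$), but the argument attacks the wrong algorithm, and this is a genuine gap rather than a presentational one. You assume that {\algobffaa}, when run on the {\problemam} problem, returns the set in its peeling chain that maximizes $f_\saa$. That contradicts the semantics of the paper's generic {\algobff} (Algorithm~1): the aggregate density function $f$ of the problem being solved is an \emph{input}, the scoring function only decides which node is peeled, and the final line returns $\arg\max_i f(S_i,\calG)$. Hence {\algobffaa} applied to {\problemam} peels by average degree but evaluates the chain with $f=f_\sam$. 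On your instance this is fatal: by symmetry, every peeling order produces a chain containing exactly one set of each cardinality $s=n,n-1,\dots,1$, and all sets of size $s$ have the same value $f_\sam=\frac{1}{n}(n-s)(s-1)$. The chain therefore contains a \emph{global optimizer} of $f_\sam$ (any prefix of size about $(n+1)/2$), the final argmax selects it, and {\algobffaa} is optimal on your rotating-isolated-vertex instance — no $O\left(\frac{1}{n}\right)$ gap follows. The degree-one variant you mention suffers from the same failure. Your proof is internally consistent only under the reading that {\algobffaa} is the entire \problemaa{} solver used as a black box (returning its $f_\saa$-best prefix), but that reading is not supported by Algorithm~1, and a hardness example should not hinge on it.

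The missing idea — and the heart of the paper's own proof — is that the hard instance must make \emph{every} set in the peeling sequence bad for $f_\sam$, not merely the $f_\saa$-maximizing one; this makes the lower bound robust to whichever chain set the algorithm ultimately reports. The paper achieves this by planting, next to a complete bipartite graph $K_{b,b}$ (the intended optimum, with $f_\sam=b=\Theta(n)$), two vertices $u,v$ joined to each other and to a hub $s$ in all snapshots, where $u$ is adjacent to all bipartite nodes only in the first $\tau/2$ snapshots and $v$ only in the last $\tau/2$. Then $u$ and $v$ have slightly \emph{higher} average degree than the bipartite nodes, so average-degree peeling removes the bipartite nodes first and keeps $u,v$ essentially until the end; since $v$ has degree $2$ in the first half of the snapshots and $u$ has degree $2$ in the second half, every set $S_i$ in the chain satisfies $f_\sam(S_i)\le 2$, while the true optimum never appears as any $S_i$. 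To repair your construction you would need to graft a similar pair of ``poison'' vertices onto it — high average degree so they survive the peeling, but constant minimum degree in a constant fraction of snapshots — so that the mid-size sets achieving $f_\sam=\Theta(n)$ never occur in the chain at all.
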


\begin{proof}
In the Appendix.
\end{proof}

\begin{proposition}\label{prop:prop3}
	The approximation ratio of algorithm {\algobffmm} for the {\problemma} problem is $O\left(\frac{1}{\sqrt{n}}\right)$  where $n$ is the number of nodes.
\end{proposition}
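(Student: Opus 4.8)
The plan is to prove the bound by exhibiting an adversarial family of graph histories, parametrized by $n$, on which \algobffmm{} returns a set whose $f_\sma$ value is only a $\Theta(1/\sqrt n)$ fraction of the optimum. The guiding principle is that, by Proposition~\ref{prop:mm}, \algobffmm{} is optimal for \problemmm{} and therefore always returns a set $S$ with $f_\sma(S)\ge f_\smm(S)\ge\mathrm{OPT}_{\smm}$; hence the instance must have a \emph{small} optimal minimum density but a \emph{large} optimal average density. Concretely I would use two snapshots and split $V$ into a core $S^\ast=P\cup Q$ with $|P|=|Q|=\Theta(\sqrt n)$ and a filler set $F$ with $|F|=\Theta(n)$. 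In $G_1$, $P$ is a clique and $Q$ is independent; in $G_2$, $Q$ is a clique and $P$ is independent; there are no edges between $P$ and $Q$, and none between $S^\ast$ and $F$. The filler set $F$ carries the same perfect matching in both snapshots.

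First I would lower-bound the optimum. Since each of $P,Q$ is a clique of size $\Theta(\sqrt n)$ in one snapshot, a direct edge count gives $d_a(G_1[S^\ast])=d_a(G_2[S^\ast])=\Theta(\sqrt n)$, hence $f_\sma(S^\ast)=\Theta(\sqrt n)$ and $\mathrm{OPT}\ge\Theta(\sqrt n)$. The orthogonality of the two cliques is exactly what keeps \emph{every} subset of $S^\ast$ at minimum density $0$ in one snapshot (any node of $P$ is isolated in $G_2$, any node of $Q$ is isolated in $G_1$), so no subset is simultaneously dense in the minimum-degree sense; this is what prevents \algobffmm{} from recovering a good set through its \problemmm{}-optimal behaviour.

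Next I would trace the peeling. Every node of $S^\ast$ has degree $0$ in one snapshot, so $\textit{score}_m=0$, whereas every filler has $\textit{score}_m=1$; as fillers are disconnected from $S^\ast$ these scores persist, so \algobffmm{} removes all of $S^\ast$ before touching $F$. It then remains to show $f_\sma(S_i)=O(1)$ for every set $S_i$ on the peeling path: while core nodes are removed, $S_i\supseteq F$ has $\Theta(n)$ nodes but at most $\binom{|P|}{2}+\binom{|Q|}{2}+|F|/2=\Theta(n)$ edges in either snapshot, so its average degree is $O(1)$; once only fillers remain, every set is a subgraph of a matching with average degree at most $1$. This holds independently of the tie-breaking order among the score-$0$ nodes, so $\mathrm{ALG}=\max_i f_\sma(S_i)=O(1)$, yielding $\mathrm{ALG}/\mathrm{OPT}=O(1/\sqrt n)$.

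The main obstacle is the design itself rather than the calculation: one must force the full graph (hence $S_0=V$, which \algobffmm{} always evaluates) and every peeled set to be sparse on average, while hiding a proper subset $S^\ast$ that is dense on average in both snapshots yet invisible to a minimum-degree criterion. The orthogonal-cliques-plus-dilution construction resolves this tension, and the remaining work — the edge counts bounding $f_\sma(S_i)$ and $f_\sma(S^\ast)$ and the check that filler scores dominate — is routine.
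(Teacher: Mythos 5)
Your proof is correct, and it exploits the same underlying mechanism as the paper's own proof: build an instance in which every node of the $f_\sma$-optimal set has minimum degree zero in some snapshot, so that $\textit{score}_m$-based peeling strips the optimum away first, while a large sparse filler with strictly positive minimum degree both survives the peeling and dilutes every prefix $S_i$ to average density $O(1)$. The instantiations, however, are genuinely different. The paper uses $\tau=m=\sqrt{n}$ snapshots: the optimum is a set $A$ of $m$ nodes that forms a clique minus one ``rotating'' excluded node $v_t$ per snapshot (which is what forces $\textit{score}_m(v)=0$ for every $v\in A$), and the filler is a cycle on $m^2=\Theta(n)$ nodes with score $2$; then $f_\sma(A)=\Theta(\sqrt{n})$ while every prefix stays at $\Theta(1)$. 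You obtain the same gap with only \emph{two} snapshots, replacing the rotating-hole clique by two complementary cliques $P$ and $Q$ (each a clique in one snapshot and independent in the other) and the cycle by a perfect matching on $\Theta(n)$ filler nodes. Your construction is stronger in one respect: it shows the $O\left(\frac{1}{\sqrt{n}}\right)$ ratio is attained already at constant $\tau=2$, whereas the paper's score-zero mechanism needs one snapshot per node of $A$ and hence $\tau$ growing like $\sqrt{n}$. The paper's construction, conversely, keeps the optimum an almost-clique in every snapshot, so the failure is exhibited even when only a single node of the optimum is isolated per snapshot, while in your instance half of $S^\ast$ is isolated in each $G_t$. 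Your supporting calculations check out: $f_\sma(S^\ast)=\Theta(\sqrt{n})$, the filler's score of $1$ dominates the core's score of $0$ throughout, the $\Theta(n)$-edge bound gives $f_\sma(S_i)=O(1)$ for every prefix containing $F$ (independently of tie-breaking), and sub-matchings bound the remaining prefixes, so the returned value $\max_i f_\sma(S_i)$ is indeed $O(1)$.
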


\begin{proof}
In the Appendix.
\end{proof}

We also consider applying the {\algobffaa} algorithm that selects to remove the node with the minimum average degree.
We can show that {\algobffaa} has a poor approximation ratio for the {\problemam} problem.

\begin{proposition}\label{prop:prop4}
The approximation ratio of algorithm {\algobffaa} for the {\problemma} problem is $O\left(\frac{1}{\sqrt{n}}\right)$  where $n$ is the number of nodes.
\end{proposition}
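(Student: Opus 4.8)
The plan is to prove this by constructing a family of graph histories, parametrized by the number of nodes $n$, on which \algobffaa returns a set whose $f_\sma$ value is a factor $\Theta(\sqrt{n})$ below the optimum; this certifies that the worst-case ratio is $O(1/\sqrt{n})$. The construction is in the same spirit as the instance needed for Proposition~\ref{prop:prop3}, and exploits the same source of failure: \algobffaa orders removals by the \emph{average} degree in the average graph, a quantity that is blind to the per-snapshot minimum that $f_\sma$ rewards.

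Concretely, I would use $\tau = 2$ snapshots over a node set $V = S^\ast \cup B$ with $S^\ast$ and $B$ disjoint, $|S^\ast| = \lceil\sqrt{n}\rceil$ and $|B| = n - |S^\ast|$. In $G_1$, let both $S^\ast$ and $B$ be cliques, with no edges between them; in $G_2$, let $S^\ast$ again be a clique but let $B$ carry no edges at all. First I would lower-bound the optimum: since $S^\ast$ is a clique in both snapshots, $f_\sma(S^\ast,\calG) = \min_t d_a(S^\ast, G_t) = |S^\ast| - 1 = \Theta(\sqrt{n})$, so $\mathrm{OPT} \ge \Theta(\sqrt{n})$.

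Next I would pin down the peeling order. In the average graph every node of $B$ has degree $\tfrac12(|B| - 1) = \Theta(n)$, whereas every node of $S^\ast$ has degree $|S^\ast| - 1 = \Theta(\sqrt{n})$, so \algobffaa removes an $S^\ast$-node first. The crucial point is that this ordering is preserved inductively: deleting $S^\ast$-nodes only lowers the scores of the remaining $S^\ast$-nodes and leaves the $B$-scores untouched (there are no $S^\ast$--$B$ edges), so the algorithm deletes all of $S^\ast$ before touching $B$. Consequently every set $S_i$ on the peeling path either contains all of $B$ together with a clique remnant $S^\ast_s \subseteq S^\ast$, or is a subset of $B$. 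I would then bound $f_\sma(S_i,\calG)$ by evaluating it on $G_2$, where $B$ contributes no edges: for a remnant set the induced average density in $G_2$ is $2\binom{s}{2}/(s + |B|) \le |S^\ast|^2/|B| = O(1)$, and for a subset of $B$ it is $0$. Since $f_\sma$ is a minimum over snapshots, this gives $f_\sma(S_i,\calG) = O(1)$ for every $S_i$ on the path, so the value returned by \algobffaa is $O(1)$. Combining with $\mathrm{OPT} = \Theta(\sqrt{n})$ yields the ratio $O(1/\sqrt{n})$.

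The main obstacle, and the step that needs the most care, is the inductive argument that all of $B$ survives until every node of $S^\ast$ has been peeled; this is exactly what guarantees that the good configuration $S^\ast$ is never isolated on the path, and that every candidate set is diluted by the edgeless-$B$ snapshot. Once that ordering claim is secured, the density estimates are routine. A minor secondary check is confirming that $\mathrm{OPT}$ cannot be so large as to change the asymptotics, but since a larger optimum only shrinks the ratio, the bound $\mathrm{OPT} \ge |S^\ast| - 1$ is all that is required.
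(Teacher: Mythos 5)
Your proposal is correct and follows essentially the same route as the paper's proof: both build a counterexample consisting of a small clique (the optimum) that persists in every snapshot together with a much larger clique that is edgeless in the final snapshot, so that {\algobffaa}'s average-degree scores peel the optimal clique first while the final snapshot caps $f_\sma$ of every set on the peeling path at $O(1)$. The only difference is cosmetic: you use $\tau = 2$ snapshots with group sizes $\lceil\sqrt{n}\rceil$ and $n - \lceil\sqrt{n}\rceil$, whereas the paper uses $\tau = m$ snapshots with cliques of size $m$ and $m^2$ (where $n = m^2 + m$), which changes nothing essential in the argument.
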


\begin{proof}
	In the Appendix.
\end{proof}

The complexity of {\problemma} and {\problemam}  is an open problem.
Jethava and
Beerenwinkel~\cite{jethava15finding} conjecture that the {\problemma} problem is NP-hard, yet they do not provide a proof.

Given that {\algobffaa} and {\algobffmm} have no theoretical guarantees, we also investigate a
\emph{greedy} approach, which selects which node to remove
based on the objective function of the problem at hand.  This greedy approach is again an instance of the iterative algorithm
shown in Algorithm~\ref{algo:bff}. More specifically, for a target function $f$ (either $f_\sam$ or $f_\sma$), given a set $S_{i-1}$, we define the score $\textit{score}_g(v,\calG[S_i])$ of node $v\in S_i$ as follows:
\[
\textit{score}_g(v,\calG[S_{i-1}]) = f\left(S_{i-1},\calG\right) - f\left(S_{i-1}\setminus \{v\},\calG\right).
\]
At iteration $i$, the algorithm selects the node $v_i$ that causes the smallest decrease, or the largest increase in the target function $f$.
We refer to this algorithm as {\algobffgreedy}.
{\algobffgreedy} requires to check all nodes when choosing which node to remove at each step (shown in Algorithm \ref{algo:score_g}), thus leading to complexity $O(n^2\tau + nM)$.

\begin{algorithm}
	\caption{\small The score$_g$ algorithm.}
	\label{algo:score_g}
	\begin{algorithmic}[1]
		\small
		\Statex{{\bf Input:} Graph history $\calG = \{G_1, \dots G_\tau\}$; an aggregate density function $f$}
		\Statex{{\bf Output:} Node with the minimum score$_g$}
		\vspace{0.1cm}
		\hrule
		\vspace{0.1cm}
        \Procedure{ScoreAndUpdate}{$ $}
            \State $score_{g}[u] = \emptyset$ \textbf{for} $u \in V$

        	\For{\textbf{each} $u \in V$}
        		\State $V' = V \setminus \{u\}$
        		\State $score_{g}[u] = f(V',\calG)$
        	\EndFor
        	
        \State $u$ = $arg\displaystyle\min_{v \in V}{score_{g}[v]}$

		\For{\textbf{each} $G_t \in \calG$}
			\For{\textbf{each} $(u, v) \in E_t$}
				\State $E_t = E_t - (u, v)$
			\EndFor	
		\EndFor
		\State $V = V \setminus \{u\}$
		\State{\textbf{return} u}
		\EndProcedure
	\end{algorithmic}
\end{algorithm}

\section{The {\large{\problemtbff}} Problem}
\label{sec:tbff}
In this section, we relax the requirement that the nodes
are connected in all snapshots of a graph history.
Instead, we ask to find the subset of nodes with the maximum aggregate density in at least $k$ of the snapshots.
We call this problem  {\it On-Off} {\problembff} ({\problemtbff}) problem.
We formally define {\problemtbff}, we show that it is NP-hard and develop two general types of algorithms for efficiently solving it in practice.

\subsection{Problem definition}
In the {\problemtbff} problem, we seek to
find a collection $\calC_k$ of $k$ graph snapshots,
and a set of nodes $S\subseteq V$, such that
the subgraphs induced by $S$ in $\calC_k$ have high aggregate density.
Formally, the {\problemtbff} problem is defined as follows:

\begin{problem}[The On-Off {\problembff} ({\problemtbff}) Problem]
\label{pr:tbff}
Given a graph history ${\calG}$ = $\{G_1$, $G_2$, $\dots,$ $G_\tau\}$, an aggregate density function $f$, and an integer $k$,
find a subset of nodes $S\subseteq V$, and a subset $\calC_k$ of $\calG$ of size $k$, such that $f\left(S,\calC_k\right)$
is maximized.
\end{problem}

As for Problem~\ref{pr:bff}, depending on the choice of the aggregate density function
$f$, we have four variants of {\problemtbff}. Thus, $f_\smm$, $f_\sma$, $f_\sam$
and $f_\saa$ give rise to problems {\problemtbffmm}, {\problemtbffma}, {\problemtbffam} and
{\problemtbffaa} respectively.

Note that the subcollection of graphs $\calC_k \subset \calG$ does not need to consist of contiguous graph snapshots.
If this were the case, then the problem could be  solved easily by considering all possible contiguous subsets of
$[1,\tau]$ and
outputting the one with the highest density. However,
all the four variants of the {\problemtbff} become NP-hard if we drop the constraint for consecutive graph
snapshots.

\begin{theorem}
Problem~\ref{pr:tbff} is NP-hard for any definition of the aggregate density function $f$.
\end{theorem}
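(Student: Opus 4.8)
The plan is to establish NP-hardness by reduction for each of the four variants separately, reducing from the \emph{Maximum Edge Biclique} problem (equivalently, deciding whether a $0/1$ matrix contains a $k\times s$ all-ones submatrix), which is well known to be NP-hard. Given such a matrix $A$ with rows $R$ and columns $C$, I would build a graph history whose node set is $V = C$ and whose snapshots are indexed by the rows: for row $i$, the snapshot $G_i$ is the disjoint union of a clique on the columns $\{j : A_{ij}=1\}$ and isolated vertices for the columns with $A_{ij}=0$. The intended correspondence is that a set of columns $S$ together with a set of $k$ rows forms an all-ones submatrix iff $S$ induces a clique in each of the $k$ chosen snapshots. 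Note that this reduction genuinely exploits the freedom to choose which $k$ of the $\tau$ snapshots to keep: with $k=\tau$ the problem degenerates to \problembff, which is solvable in polynomial time, so the hardness must live in the snapshot selection.

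The key observation I would use is that, on any fixed vertex set $S$, every one of the four per-snapshot densities is at most $|S|-1$, with equality exactly when the snapshot induces the complete graph on $S$; since each aggregate $f$ is a minimum or an average of per-snapshot densities, $f(S,\calC_k)\le |S|-1$, with equality iff $S$ is complete in every chosen snapshot. Thus the aggregate density is maximized precisely on common cliques, which are exactly the all-ones submatrices of $A$.

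For \problemtbffmm\ this immediately yields a clean reduction with a fixed threshold: I claim $f_\smm(S,\calC_k)\ge s-1$ for some $S$ and some size-$k$ subcollection $\calC_k$ iff $A$ has a $k\times(\ge s)$ all-ones submatrix. Indeed, $f_\smm\ge s-1$ forbids any chosen snapshot from containing a $0$-column of $S$ (such a column would be isolated, giving minimum degree $0$), so every node of $S$ must be a $1$-entry in every chosen row, forcing $|S|\ge s$ and an all-ones submatrix; the converse is immediate. Since the decision version of \problemtbffmm\ is then NP-hard, so is the optimization version, and the reduction is clearly polynomial.

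The hard part is the three ``loose'' variants \problemtbffma, \problemtbffam\ and \problemtbffaa, where averaging over snapshots (or over nodes) creates a loophole: a larger set $S$ that is complete in only some of the chosen snapshots, or that contains a few low-degree nodes, can still meet the fixed threshold $s-1$, producing false positives that do not correspond to a genuine $k\times s$ biclique. My plan to close this gap is to reduce from the size-constrained version of the biclique problem and to augment the construction with a gadget that effectively pins $|S|$ to $s$, so that at the forced size the bound $f(S,\calC_k)\le |S|-1 = s-1$ is again tight exactly on all-ones submatrices. Verifying that such a gadget preserves the equivalence for each of $f_\sma, f_\sam, f_\saa$ — in particular that no alternative $(S,\calC_k)$ can exceed the threshold without forming a clique in all $k$ chosen snapshots — is the main technical obstacle and the step I would spend the most care on.
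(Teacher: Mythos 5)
There is a genuine gap: your argument is complete for only one of the four variants, \problemtbffmm{}, while the theorem asserts NP-hardness for \emph{all} definitions of $f$. For \problemtbffma{}, \problemtbffam{} and \problemtbffaa{} you explicitly defer to ``a gadget that effectively pins $|S|$ to $s$'' that you never construct or verify, and this is exactly where the difficulty lives. The loophole you identify is real and fatal to your clique-per-row construction: take $2s$ columns and $k=2s$ rows, where row $i$ has ones in every column except column $i$; then for $S$ equal to all $2s$ columns and $\calC_k$ all rows, $f_\sma(S,\calC_k)=f_\saa(S,\calC_k)=\frac{(2s-1)(2s-2)}{2s}\approx 2s-3 \gg s-1$, yet the common support of the chosen rows is empty, so there is no $k\times s$ all-ones submatrix at all (a similar example with half all-ones rows and half all-zero rows defeats $f_\sam$). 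Moreover, since \problemtbff{} places no cardinality constraint on $S$, any ``pinning'' must be enforced through the density function itself, and it is not evident such a gadget exists for this construction; so as written, the proposal proves a quarter of the theorem and correctly flags, but does not overcome, the main technical obstacle. (A minor additional imprecision: deciding whether a $k\times s$ all-ones submatrix exists for \emph{given} $k,s$ is not the same problem as Maximum Edge Biclique, though the case $s=k$ is the NP-complete balanced biclique problem, which suffices for your \problemtbffmm{} argument.)

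For contrast, the paper closes the averaging loophole not with a size gadget but by tailoring the construction to each pair of variants so that the per-snapshot density is \emph{structurally capped}, making the threshold tight without any control on $|S|$. It reduces from Clique. For the $d_m$-based variants ($f_\smm$, $f_\sam$) each snapshot is a star centered at a vertex $i$ of $G$: then $d_m(S,G_i)\in\{0,1\}$, equal to $1$ only if $i\in S$ and $S$ lies in $i$'s closed neighborhood, so even the time-average $f_\sam$ can reach $1$ only if \emph{every} chosen term equals $1$, which forces the $k$ chosen centers to be pairwise adjacent. For the $d_a$-based variants ($f_\sma$, $f_\saa$) each snapshot is a single edge of $G$, and one asks for $K=\binom{k}{2}$ snapshots with aggregate at least $1/k$: each per-snapshot density is at most a constant over $|S|$, so the threshold is attainable only when all $\binom{k}{2}$ chosen edges fit inside a set of $k$ vertices, i.e., form a $k$-clique; here growing $S$ strictly hurts, so no false positives can arise. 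If you want to rescue the biclique route, you would need an analogous cap — e.g., replacing the clique on each row's support by a much sparser graph — which is a substantive redesign rather than a patch on the present construction.
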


We will prove that there exists a clique of size at least $k$ in graph $G$ if and only if there exists a set of nodes $S$ and a subset $\calC_k \subseteq \calG$ of $k$ snapshots, with $f(S,\calC_k) \geq 1$. The forward direction is easy; if there exists a subset of nodes $S$ in $G$, with $|S| \geq k$, that form a clique, then selecting this set of nodes $S$, and a subset $\calC_k$ of $k$ snapshots  that correspond to nodes in $S$ will wield $f_\smm(S,\calC_k) = f_\sam(S,\calC_k) = 1$. This follows from the fact that every snapshot is a complete star where $d_m(S,G_i) = 1$ for all $G_i \in \calC_k$. To prove the other direction, we observe that all our snapshots consist of a star graph, and a collection of disconnected nodes. Given a set $S$,
$d_m(S,G_i) = 1$, if $i\in S$ and all nodes in $S$ are connected to the center node $i$, and zero otherwise. Therefore, if $f_\smm(S,\calC_k) = 1$ or $f_\sam(S,\calC_k) = 1$, then this implies that $d_m(S,G_i) = 1$ for all $G_i \in \calC_k$, which means that the $k$ centers of the graph snapshots in $\calC_k$ are connected to all nodes in $S$, and hence to each other. Therefore, they form a clique of size $k$ in the graph $G$.

In the case of $f_\saa$ and $f_\sma$ the construction proceeds as follows: given the graph $G=(V,E)$, with $|E| = m$ edges, we construct a graph history $\calG=\{G_1,\ldots , G_\tau\}$ with $\tau = m$ snapshots. All snapshots are defined over the vertex set $V$. There is a snapshot $G_e$ for each edge $e\in E$, consisting of the single edge $e$.
We can prove that there exists a clique of size at least $k$ in graph $G$ if and only if there exists a set of nodes $S$ and a subset $\calC_K \subseteq \calG$ of $K = k(k-1)/2$ snapshots, with $f(S,\calC_K) \geq 1/k$.

We will prove that there exists a clique of size at least $k$ in graph $G$ if and only if there exists a set of nodes $S$ and a subset $\calC_K \subseteq \calG$ of $K = k(k-1)/2$ snapshots, with $f(S,\calC_K) \geq 1/k$. The forward direction is easy. If there exists a subset of nodes $S$ in $G$, with $|S| = k$, that form a clique, then selecting this set of nodes $S$, and the ${k \choose 2}$ snapshots $\calC_K$ in $\calG$ that correspond to the edges between the nodes in $S$ will yield $f_\saa(S,\calC_K) = f_\sma(S,\calC_K) = 1/k$.

To prove the other direction, assume that there is no clique of size greater or equal to $k$ in $G$. Let $\calC_K$ be any subset of $K = k(k-1)/2$ snapshots, and let $S$ be the union of the endpoints of the edges in $\calC_K$. Since $S$ cannot be a clique, it follows that $|S| = \ell > k$. Therefore, $f_\saa(S,\calC_K) = f_\sma(S,\calC_K)= 1/\ell < 1/k$.

\begin{algorithm}[t]
	\caption{\small The Iterative ({\textsc{ITR}}) {\algotbff}  algorithm.}
	\label{algo:tbff}
		\small
	\begin{algorithmic}[1]
		\Statex{{\bf Input:} Graph history $\calG = \{G_1,\ldots G_\tau\}$; an aggregate-density function $f$; integer $k$}
		\Statex{{\bf Output:} A subset of nodes $S$ and a subset of snapshots $\calC_k\subseteq \calG$.}
		\hrule
		\vspace{0.2cm}
			\State {converged $=$ False}
            \State $(\calC_k^0, S^0) = \textsc{Initialize}\left(\calG,f\right)$
	        \State $ds^0 = 0$
			\While {not converged}
                 \State $\calC_k = \textsc{BestSnapshots}(S^0,f)$
                 \State $S = \algobff(\calC_k, f)$
                 \State $ds = f(S,\calC_k)$
                 \If {$ds<ds^0$}
                 \State Converged $=$ True
                 \Else { $ds^0=ds$, $S^0=S$ }
                 \EndIf
			\EndWhile
			\State \textbf{return} $S,\calC_k$
	\end{algorithmic}
\end{algorithm}

\subsection{{\large {\problemtbff}} algorithms}\label{sec:bfftalgos}
We consider two general types of algorithms:  iterative and incremental algorithms.
The \textit{iterative algorithm} starts with an initial size $k$ collection  $\calC_k$ of graph snapshots and improves it, whereas
the \textit{incremental algorithm} builds the collection incrementally, adding one snapshot at a time.
Next, we describe these two types of algorithms in detail.

Note that depending on whether we are solving the {\problemtbffmm},
{\problemtbffma}, {\problemtbffam} or {\problemtbffaa} problem, we use the appropriate version of the {\algobff} algorithm in each of these algorithms.

\subsubsection{Iterative Algorithm}
\label{sec:bff-k}
The iterative ({\itg}) algorithm (shown in Algorithm \ref{algo:tbff}) starts with an initial collection of snapshots $\calC^0_k$ and set of nodes $S^0$ (routine {\sc Initialize}).
%
At each iteration, given a set $S$,
it finds the $k$ graph snapshots with the highest $d(S, G_i)$ score; this is done by  {\sc BestSnapshots}.
{\sc BestSnapshots} computes the density $d(S, G_i)$ of $S$ in each snapshot $G_i$ $\in$ $\calG$ and outputs the $k$ snapshots $\calC_k$ with the highest density.
Given $\calC_k$, the algorithm then finds the set $S\subseteq V$ such that $f\left(S,\calC_k\right)$ is maximized.
This step essentially solves Problem~\ref{pr:bff}
on input $\calC_k$ for aggregate density function $f$ using the {\algobff} algorithm.
The {\itg} algorithm keeps iterating between collections $\calC_k$ and dense sets of nodes $S$ until
no further iterations
can improve the score $f\left(S,\calC_k\right)$.


An important step of the Iterative {\algotbff} is the initialization of $\calC^0_k$  and $S^0$.
We consider three different alternatives
for this initialization: \emph{random}, \emph{contiguous}, and \emph{at least-k}.

\noindent {\textit {Random initialization}}  ({\itr}): In this initialization, we randomly pick $k$ snapshots
$\calC^0_k$ from $\calG$. These snapshots are then used for solving the corresponding {\problembff} problem
on input $\calC^0_k$ and produce $S^0 = \algobff (\calC^0_k, f)$.

\noindent {\textit {Contiguous initialization}} ({\itc}): In this initialization,
we first find an $S^0$ that consists of  the best $k$  contiguous graph  snapshots.
Given $\calG=\{G_1,\dots ,G_\tau\}$, we go over all the $O(\tau)$
contiguous sets of $k$ snapshots from $\calG$,  and find the set of $k$ snapshots
$\calC^0_k$
and corresponding set of nodes $S^0$ that maximize
$f(S^0,\calC^0_k)$.
The intuition behind this initialization technique is that it assumes that the best $k$ snapshots
of $\calG$ are going to be contiguous. Our experiments demonstrate that in practice this is true
in many datasets -- e.g., in collaboration networks that evolve over time and we expect to see
some temporal locality.

\noindent {\textit {At least-$k$ initialization}} ({\itk}): In this initialization, we solve the {\problembff} problem independently in each snapshot
$G_i\in\calG$. This results in $\tau$ different sets $S_i\subseteq V$, one for each solution of {\problembff} on $G_i$.
$S^0$ includes the nodes that appear in at least $k$ of the $\tau$ sets $S_i$.
The intuition behind this initialization is to include in the initial solution those nodes that appear to be densely connected in many snapshots.
We also experimented with other natural alternatives, such as the union: $S^0 = \cup_{i=1\ldots \tau}S_i$ and
the intersection: $S^0 = \cap_{i=1\ldots \tau}S_i$;  the at least-$k$ approach seems to strike a balance between the two.


The running time of the iterative {\algotbff} algorithm is $O\left(I\left(n\tau+M\right)\right)$, where $I$
is the number of iterations required until convergence, and the $O(n\tau+M)$ comes from the running time of {\algobff}. In practice, we observed that
the algorithm converges in at most $6$ iterations.

\begin{algorithm}[t]
	\caption{\footnotesize The Incremental Density ({\ind}) {\algotbff} algorithm.}
	\label{algo:tbff_incrDens}
	\begin{algorithmic}[1]
	\small
		\Statex{{\bf Input:} Graph history $\calG = \{G_1,\ldots G_\tau\}$; aggregate-density function $f$; integer $k$}
		\Statex{{\bf Output:} A subset of nodes $S$ and a subset of snapshots $\calC_k\subseteq \calG$.}
		\hrule
		\vspace{0.1cm}		
		\State $S_{ij} = \algobff(\{G_i,G_j\}, f)$, $\forall G_i,G_j \in {\calG}$
		\State $\calC_2 = \arg\displaystyle\max_{G_i,G_j \in {\calG}}{f(S_{ij},\{G_i,G_j\})}$
		\For{$i=3$ ; $i \leq k$}
			\For{\textbf{each} $G_t \in {\calG} \setminus {\calC}_{i-1}$}			
				\State $S_t = \algobff(\calC_{i-1} \cup \{G_t\},f)$
			\EndFor
        \State $G_m = \arg\displaystyle\max_{G_t}{f(S_t,\calC_{i-1} \cup \{G_t\})}$		
		\State ${\calC}_i = \calC_{i-1} \cup \{G_m\}$		
		\EndFor		
		\State $S = \algobff(\calC_k,f)$			
		\State \textbf{return} $S,\calC_k$
	\end{algorithmic}
\end{algorithm}

\begin{algorithm}[t]
	\caption{\footnotesize The Incremental Overlap ({\ino}) {\algotbff} algorithm.}
	\label{algo:tbff_incrOver}
	\begin{algorithmic}[1]
	\small
		\Statex{{\bf Input:} Graph history $\calG = \{G_1,\ldots G_\tau\}$; aggregate-density function $f$; integer $k$}
		\Statex{{\bf Output:} A subset of nodes $S$ and a subset of snapshots $\calC_k\subseteq \calG$.}
		\hrule
		\vspace{0.1cm}	
		\State $S_i = \algobff(G_i, f)$, $\forall G_i \in {\calG}$
		\State $\calC_2 = \arg\displaystyle\max_{\substack{G_i,G_j \in {\cal G}}}{\frac{|S_i \cap S_j|}{|S_i \cup S_j|}}$
		\For{$i=3$ ; $i \leq k$}
			\State $S_C = \algobff(\calC_{i-1}, f)$
		\State $G_m = \arg\displaystyle\max_{G_t}{\frac{|S_t \cap S_C|}{|S_t \cup S_C|}}$
		\State ${\calC}_i = \calC_{i-1} \cup \{G_m\}$			
		\EndFor
		\State $S = \algobff(\calC_k,f)$			
		\State \textbf{return} $S,\calC_k$	
	\end{algorithmic}
\end{algorithm}

\subsubsection{Incremental Algorithm}
The incremental algorithm starts with a collection $\calC_2$ with two snapshots and  incrementally adds snapshots
to it until a collection $\calC_k$ with $k$ snapshots is formed.
Then, the  appropriate {\algobff} algorithm is used to compute the most dense subset of nodes $S$ in  $\calC_k$.

We use two different policies for selecting snapshots. The first one, termed \textit{incremental density} ({\ind}) algorithm (shown in Algorithm \ref{algo:tbff_incrDens}), selects graph snapshots so as to maximize density, whereas the second one, termed \textit{incremental overlap} ({\ino}) algorithm (shown in Algorithm \ref{algo:tbff_incrOver}), selects graph snapshots so as to maximize the overlap among nodes in the dense subsets.

\vspace*{0.05in}
\noindent {\textit {Incremental density}} ({\ind}):
To select the pair of snapshots to form the initial collection $\calC_2$,  we solve the {\problembff} problem independently for each pair of
snapshots
$G_i,G_j\in\calG$. This gives us  ${\tau}\choose{2}$ dense sets $S_{ij}$ as solutions. We select the pair of snapshots whose dense subgraph $S_{ij}$ has the largest density (lines 1--2).
The algorithm
then builds the solution incrementally in iterations. In iteration $i$, we construct the solution $\calC_i$ by adding to solution $\calC_{i-1}$ the graph snapshot $G_m$ that maximizes the density function $f$. That is if $S_t$ is the densest subset in the sequence $\calC_{i-1}\cup\{G_t\}$, $G_m = \displaystyle\arg\max_{G_t} f(S_t,\calC_{i-1}\cup\{G_t\})$ (lines 3--6).
The running time of the {\ind} algorithm is $O\left(\tau^2(n+M) + k\tau\left(kn\tau+M\right)\right)$.
The first term is due to the initialization step in line 1, where we look for the best pair of snapshots.
If efficiency is important we can initialize the algorithm with a random pair to save time.

\noindent {\textit {Incremental overlap}} ({\ino}):
To form the initial collection $\calC_2$,  we first solve the {\problembff} problem independently in each snapshot
$G_i\in\calG$. This gives us  $\tau$ different sets $S_i\subseteq V$, where  $S_i$ is the most dense subgraph in  $G_i$.
The algorithm selects from these $\tau$ sets  the two most similar ones,  $S_i$ and $S_j$, and initializes $C_2$ with the corresponding snapshots $G_i$ and $G_j$ (lines 1--2). For defining similarity
between sets of nodes, we use the Jaccard similarity.
To form $C_i$ from $C_{i-1}$, the algorithm first solves the {\problembff} problem in $C_{i-1}$. Let $S_C$ be the solution.
Then, it selects from the remaining snapshots and adds to $C_{i-1}$ the snapshot $G_m$ whose dense set $S_t$ is the most similar  with  $S_C$ (lines 3--6).
%
The running time of the {\ino} algorithm is $O(\tau^2n + k\left(n\tau+M\right))$,
where the first term is the time for the initialization and the second term for the for-loop.


\section{ {\Large {\problembff}} Problem Extensions}\label{sec:generalized}

The definitions of the {\problembff} and {\problemtbff} problem focus on the identification of a set of nodes $S$ such that their
aggregate density is maximized.
We now consider natural extensions of the {\problembff} problem by placing additional constraints on the dense subgraphs.

\spara{Query-node constraint:} An interesting extension is introducing a set $Q$ of \emph{seed query nodes}
and requiring that the output set of nodes
$S$ has high density and also contains the input seed nodes.
A similar extension was introduced for static (e.g., single snapshots) graphs in ~\cite{DBLP:conf/kdd/SozioG10}.
In practice, this variant of  {\problembff}  identifies the lasting ``best friends"
of the query nodes. We call this the {\problemqbff} problem.

We can modify the {\algobff} algorithms appropriately so that they take into consideration this additional constraint.
In particular, {\algobffmm}  stops when a query node in $Q$ is selected to be removed.
Let us call this modified algorithm, {\algoQbffmm}.  We can prove the following proposition. (We omit the proof due to space constraints.)

\begin{proposition}
{\algoQbffmm} solves the {\problemQmm} problem  optimally in polynomial time.
\end{proposition}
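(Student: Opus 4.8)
The plan is to adapt the optimality argument of Proposition~\ref{prop:mm} for {\algobffmm} to the query-constrained setting, where the feasible solutions are exactly the sets $S$ with $Q\subseteq S$. Recall that $f_\smm(S,\calG)=\min_{u\in S}\textit{score}_m(u,\calG[S])$, and that {\algoQbffmm} peels nodes exactly as {\algobffmm}---always removing the minimum-$\textit{score}_m$ node---but halts the first time the node it would remove lies in $Q$. Consequently every set $S_0=V, S_1, \ldots, S_{j-1}$ that the algorithm produces and evaluates before halting (at some iteration $j$) still contains $Q$, so {\algoQbffmm} only ever reports feasible solutions.

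First I would fix an optimal feasible solution $S^\ast$ (so $Q\subseteq S^\ast$) and let $i$ be the first iteration at which the selected node $v_i$ belongs to $S^\ast$. By minimality of $i$, none of $v_1,\dots,v_{i-1}$ lies in $S^\ast$, hence $S^\ast\subseteq S_{i-1}$; and since $Q\subseteq S^\ast$, this also gives $Q\subseteq S_{i-1}$, so $S_{i-1}$ is feasible. I would then verify that $S_{i-1}$ is actually among the evaluated sets: because $Q\subseteq S^\ast$, the nodes $v_1,\dots,v_{i-1}$ are all non-query nodes, so the halting iteration satisfies $j\geq i$, and therefore $S_{i-1}$ (with index $i-1\leq j-1$) is produced before the stop.

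The core is then the same inequality chain as in Proposition~\ref{prop:mm}. Since $v_i$ minimizes $\textit{score}_m$ over $S_{i-1}$, every $u\in S_{i-1}$ satisfies $\textit{score}_m(u,\calG[S_{i-1}])\geq \textit{score}_m(v_i,\calG[S_{i-1}])$. Because $S^\ast\subseteq S_{i-1}$, shrinking the node set cannot increase any degree, so $\textit{score}_m(v_i,\calG[S_{i-1}])\geq \textit{score}_m(v_i,\calG[S^\ast])$, and since $v_i\in S^\ast$ the latter is at least $\min_{u\in S^\ast}\textit{score}_m(u,\calG[S^\ast])=f_\smm(S^\ast,\calG)$. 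Taking the minimum over $u\in S_{i-1}$ yields $f_\smm(S_{i-1},\calG)\geq f_\smm(S^\ast,\calG)$; as $S_{i-1}$ is feasible and $S^\ast$ optimal, $S_{i-1}$ is itself optimal, and the set returned in line~\ref{line:evaluation} has value at least $f_\smm(S_{i-1},\calG)$, hence is optimal. For the running time, {\algoQbffmm} performs at most $n-1$ peeling steps using the same \textsc{ScoreAndUpdate} data structure as {\algobffmm}, giving the same $O(n\tau+M)$ bound.

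I expect the main obstacle to be the bookkeeping of the halting condition rather than the density inequalities: one must argue that the critical set $S_{i-1}$ is simultaneously feasible ($Q\subseteq S_{i-1}$) and genuinely produced before the algorithm stops. Both facts follow from $Q\subseteq S^\ast$ together with the choice of $i$ as the first iteration that touches $S^\ast$, which guarantees that no query node is removed earlier; making this precise is the only step that goes beyond the unconstrained proof of Proposition~\ref{prop:mm}.
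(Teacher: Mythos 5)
Your proof is correct: the paper itself omits the proof of this proposition (citing space constraints), and your argument is exactly the natural extension of the paper's proof of Proposition~\ref{prop:mm}, which is what the statement implicitly relies on. The two additional points you isolate---that $Q\subseteq S^\ast$ forces $v_1,\dots,v_{i-1}$ to be non-query nodes, so $S_{i-1}$ is both feasible and actually produced before the early stop---are precisely the extra bookkeeping needed on top of the unconstrained argument, and your inequality chain and the $O(n\tau+M)$ running-time bound go through unchanged.
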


We also modify {\algobffaa} so that it does not remove seed nodes as follows: 
If at any step, the node with the minimum average degree happens to be a seed node, the algorithm selects to remove the node with the next smallest degree that is not a seed node. The algorithm stops when the only remaining nodes are seed nodes.
Let us call this modified algorithm, {\algoQbffaa}.

\begin{proposition}
Let $S^*$ be an optimal solution  for the {\problemQaa} problem and $S_A$ be the solution of the
{\algoQbffaa} algorithm.
It holds:
$f_\saa(S_A)$ $\geq$ $\frac{s \, f_\saa(S^*) + 2 \, \omega} {2(s+q)}$, where $q$ = $|Q|$, $s$ = $|S^* \setminus Q|$ and $\omega$ $=$ $\sum_{u \in Q}degree(u, S^*)$.
\end{proposition}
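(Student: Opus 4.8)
The plan is to reduce the problem to weighted densest-subgraph and then adapt Charikar's peeling analysis to the query-constrained setting. By Lemma~\ref{lemma:equivalence}, maximizing $f_\saa(S,\calG)$ over sets $S$ containing $Q$ is the same as maximizing the average density $d_a(\widehat{H}_\calG[S])$ on the average graph, and {\algoQbffaa} is exactly the peeling procedure that repeatedly removes a minimum-weighted-degree vertex subject to never deleting a vertex of $Q$. So throughout I would work inside $\widehat{H}_\calG$, writing $\degree(u,S)$ for the weighted degree of $u$ in $\widehat{H}_\calG[S]$ and $f^\ast=f_\saa(S^\ast)=d_a(\widehat{H}_\calG[S^\ast])$.

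First I would establish the optimal-degree bound that plays the role of Charikar's ``every vertex has degree at least half the density.'' Because $S^\ast$ is optimal and removing any non-query vertex $u\in S^\ast\setminus Q$ still leaves a feasible set containing $Q$, we must have $d_a(\widehat{H}_\calG[S^\ast\setminus\{u\}])\le d_a(\widehat{H}_\calG[S^\ast])$; expanding this inequality exactly as in the unconstrained case yields $\degree(u,S^\ast)\ge f^\ast/2$ for every non-query vertex $u$ of $S^\ast$. Query vertices admit no such bound, which is precisely why the quantity $\omega=\sum_{u\in Q}\degree(u,S^\ast)$ must be tracked separately.

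Next I would isolate the critical iteration: the first step at which the peeling removes a vertex of $S^\ast$. Since $Q$ is never deleted, this vertex $u^\ast$ is non-query, and just before it is removed the current set $S_j$ satisfies $S^\ast\subseteq S_j$. Monotonicity of induced degree under this inclusion gives $\degree(u^\ast,S_j)\ge\degree(u^\ast,S^\ast)\ge f^\ast/2$, and because the rule selects the minimum-degree non-query vertex, every non-query vertex of $S_j$ has degree at least $f^\ast/2$. For the query vertices I would again invoke monotonicity, $\sum_{u\in Q}\degree(u,S_j)\ge\sum_{u\in Q}\degree(u,S^\ast)=\omega$. Summing all degrees over $S_j$ and dividing by $|S_j|=n_j$ then gives
\[
d_a\bigl(\widehat{H}_\calG[S_j]\bigr)\;\ge\;\frac{(n_j-q)\,f^\ast/2+\omega}{n_j},
\]
and since {\algoQbffaa} returns the densest set it ever encounters, $f_\saa(S_A)$ is at least this quantity. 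Using $|S^\ast|=s+q\le n_j$ I would reduce this to the claimed $\frac{s\,f^\ast+2\omega}{2(s+q)}$.

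The main obstacle is exactly this last reduction. The expression $\frac{(n-q)f^\ast/2+\omega}{n}$ is monotone in $n$, and the sign of its monotonicity flips at $\omega=q f^\ast/2$: when $\omega\le qf^\ast/2$ it is increasing, so evaluating at the smallest feasible size $n=s+q$ immediately yields the stated bound, but when the query vertices are internally dense ($\omega$ large) the estimate weakens as $n_j$ grows and one directly recovers only $f_\saa(S_A)\ge f^\ast/2$. Closing this gap is the crux, and it is where I expect the real work to lie: I would try either to control $n_j$ (arguing the critical set cannot be much larger than $S^\ast$ without violating optimality or the degree bound), or to sharpen the lower bound on the query vertices' degrees inside $S_j$, so that the denominator $n_j$ cancels and $\frac{s\,f^\ast+2\omega}{2(s+q)}$ holds in every regime.
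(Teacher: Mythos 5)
Your plan is correct through its first three steps---the reduction to the average graph, the Goldberg-type bound $\degree(u,S^*)\geq f^*/2$ for every $u\in S^*\setminus Q$ (writing $f^*=f_\saa(S^*)$), and the monotonicity estimates at the first iteration $j$ that removes a vertex of $S^*$---and these do give $f_\saa(S_A)\geq h(n_j)$ with $h(n)=\frac{f^*}{2}+\frac{\omega-qf^*/2}{n}$. But the gap you flag at the end is genuine, and it is fatal in exactly the cases that matter. The regime $\omega>qf^*/2$ is not a corner case but the typical one: it holds whenever the query nodes themselves clear the Goldberg threshold, e.g.\ whenever $S^*$ is a clique, where the claimed bound equals $\frac{f^*}{2}\left(1+\frac{q}{s+q}\right)>\frac{f^*}{2}$. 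Moreover $n_j$ can exceed $s+q$ by an arbitrary amount without changing $f^*$, $s$, $q$, or $\omega$ (pad the instance with a large disjoint $f^*$-regular graph; $S^*$ stays optimal, and with adversarial tie-breaking the first vertex peeled is a vertex of $S^*$, so $n_j=|V|$). In that regime $h(n_j)$ is driven arbitrarily close to $f^*/2$, strictly below the claim, and neither of your two sketched repairs is carried out. So what you have establishes the proposition only when $\omega\leq qf^*/2$; it is a plan with its last step missing, not a proof.

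The paper's proof takes a genuinely different route, and it is instructive to see how it dodges---rather than solves---your obstruction. It never uses vertexwise optimality of $S^*$. Instead it deletes all query--query edges and runs Charikar's charging argument on the peeling itself: every edge is charged to whichever endpoint is removed first, $a_{max}$ is the largest charge, and $a_{max}\geq\frac{1}{2}f_\saa(S^*)$ because every edge inside $S^*$ is charged to a vertex of $S^*$. Its critical iteration is not the first removal of an $S^*$-vertex but the iteration whose removed vertex attains charge $a_{max}$; at that moment every surviving non-query vertex has current degree at least $a_{max}$, yielding $f_\saa(S)\geq\frac{s\,a_{max}}{s+q}+\Omega$, where $s$ and $\Omega$ are the non-query count and the normalized query-degree sum of the \emph{current} set $S$. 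Since numerator and denominator refer to the same surviving set, nothing decays as the set grows---this is precisely what eliminates your $n_j\to\infty$ degradation. The flip side is that the paper's final inequality is stated in the critical iteration's parameters, and its proof ends by silently reusing the symbols $s$ and $\omega$ to identify them with $|S^*\setminus Q|$ and $\sum_{u\in Q}\degree(u,S^*)$; that identification is the mirror image of the very step you could not justify, and it is asserted, not proved. So the ``crux'' you isolated is real: your route stalls on it explicitly, while the paper's route, read strictly, steps over it implicitly; a fully rigorous proof of the stated inequality requires supplying that bridge in either framework.
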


\begin{proof}
By Lemma \ref{lemma:equivalence}, it suffices to show that the {\algoQbffaa} algorithm provides an approximation of the 
average density of a single graph $G$.
Let $S^*$ be the optimal solution for $G$. 
Let  $G'$ be the graph that results from $G$ when we delete all edges between two query nodes in $G$.
Clearly,  $S^*$ is also an optimal solution for $G'$.
Assume that we assign each edge $(u, v)$ to either $u$ or $v$. For each node $u$, let $a(u)$ be	the number of
edges assigned to it and let $a_{max}$ = $max_u\{a(u)\}$.
It is easy to see that $f_\saa(S^*)$ $\leq$ $\frac{1}{2}$ $a_{max}$, since each edge in the optimal solution must be assigned to a node in it.
Now assume that the assignment of edges to nodes is performed as the {\algoQbffaa} algorithm proceeds.
Initially, all edges are unassigned. When at step $i$, a node $u$ is deleted, we assign to  $u$ all the edges that go from 
$S_{i-1}$ to $u$. Note that this assignment maintains the invariant that at each step, all edges between two nodes in the current set $S$
are unassigned, while all other edges are assigned. When the algorithm stops, all edges have been assigned.
Consider a single iteration of the algorithm when a node $u_{min}$ is selected to be removed and let $S$ be the current set.
Let $s$ be the number of non-query nodes in $S$, and $q$ = $|Q|$ be the number of query nodes. 
It holds: $f_\saa(S)$ $=$ $\frac{1}{s+q}$$\sum_{u \in S}$$degree(u)$ = $\frac{1}{s+q}$$\sum_{v \in S \setminus Q }$$degree(u)$+$\frac{1}{s+q}$$\sum_{u \in Q}$$degree(u)$.
Let $\Omega$ $=$ $\frac{1}{s+q}$$\sum_{u \in Q}$$degree(u)$.
Since  $u_{min}$ has the smallest degree among all nodes in $S$ but the seed nodes, 
we have $f_\saa(S)$  $\geq$ $\frac{1}{s+q}$ $s$\,$a(u_{min})$ + $\Omega$. Since all edges are assigned and edges are assigned to a node only when this node is removed, at some step of the execution of the algorithm $a(u_{min})$ = $a_{max}$. Thus, for some $S$,
$f_\saa(S)$ $\geq$  $\frac{s}{s+q}$$a_{max}$$+$$\Omega$ $\geq$  $\frac{s}{s+q}$ $\frac{1}{2}$$f_\saa(S^*)$$+$$\Omega$.
\end{proof}

\spara{Connectivity constraint:} Another meaningful extension is to impose restrictions on the connectivity of $S$.  The connectivity
of $S$ in a graph history $\calG=\{G_1,\ldots , G_\tau\}$ may have many different interpretations.
One may consider a version where all the induced subgraphs $G_t[S]$ for $t\in\{1,\ldots ,\tau\}$
are connected. Another alternative is that at least $m$ $>$ 0 of the $\tau$ $G_t[S]$'s are
connected.  Here, we assume that a definition of connectivity for $S$ is given in the form of a predicate $connected(S, \calG)$ which is true if $S$ is connected and false otherwise.
Our problem now becomes: given a  graph history $\calG$,
	a set of query nodes $Q$ $\subset$ $V$ 
	and an aggregate density function $f$, find a subset of nodes $S\subseteq V$, 
	such that  (1) $f(S,\calG)$ is maximized, (2) $Q$ $\subseteq$ $S$, and (3) $connected(S, \calG)$ is true.

To solve this problem, 
we can modify {\algobff} so that it tests for the connectivity predicate
$connected(S, \calG)$ and stops when 
the connectivity constraint no longer holds.
In our experiments, we apply a simplest test, just running the algorithms on the connected components of
the query nodes.

\spara{Size constraint:}
Finally, note that the definition of  {\problembff}  does not impose a constraint on the size
of the output set of nodes $S$. In that respect the problem is parameter-free. 
If necessary, one can add
an additional constraint to the problem definition
by imposing a cardinality constraint on the output $S$. 
However the cardinality constraint makes the subgraph-discovery problem computationally hard~\cite{DBLP:conf/kdd/SozioG10}. This also holds for the {\problembff} problem; simply
consider a graph history with replicas of the same single-snapshot graph.
\section{Experimental Evaluation}
\label{sec:experiments}
The goal of our experimental evaluation is threefold.
First, we want to evaluate the performance of our algorithms for the {\problembff} and the {\problemtbff} problems in terms of the quality of the solutions and running time. Second, we want to compare the different variants of the aggregate density functions.
Third, we want to show the usefulness of the problem, by
presenting results of {\problembff}'s and {\problemtbff}'s in two real datasets, namely research collaborators in {\db} and hashtags in Twitter.

\begin{table}[ht!]
\centering
\caption{Real dataset characteristics}
\begin{tabular}{c c c c c }
\hline
{\textbf{Dataset}} & \# Nodes & \# Edges (aver. per snapshot) & \# Snapshots\\
\hline
{\dbten} &2,625 & 1,143 &  10 \\
{\orf} & 11,492 & 22,569  & 9 \\
{\ort} & 11,806 & 31,559  & 9 \\
{\cai} & 31,379 & 45,833  & 122 \\
{\twitter} & 849 & 100  & 15 \\
{\as} & 7,716 & 7,783  & 733 \\
\hline
\end{tabular}
\label{table:real}
\end{table}

\noindent \textit{\textbf{Datasets and setting.}}
To evaluate our algorithms, we use a number of real graph histories, where the snapshots correspond to
collaboration, computer, and concept networks.
\squishlist
\item The {\dbten}\footnote{http://dblp.uni-trier.de/} dataset
contains yearly snapshots of the co-authorship graph in the 2006-2015 interval, for 11 top database and data mining
conferences.
There is an edge between two authors in a
graph snapshot, if they co-authored a paper in the corresponding year and more than two papers in the corresponding interval.

\item The {\orf}\footnote{https://snap.stanford.edu/data/oregon1.html} dataset consists of nine graph snapshots of AS peering information inferred from Oregon route-views between March 31 2001 and May 26 2001 (one snapshot per week).

\item The {\ort}\footnote{https://snap.stanford.edu/data/oregon2.html} dataset consists of nine weekly snapshots of AS graphs, between March 31, 2001 and May 26, 2001.

\item The {\cai}\footnote{http://www.caida.org/data/as-relationships/} dataset, contains
122 CAIDA autonomous systems (AS) graphs, derived from a set of route views BGP-table instances.

\item In the {\twitter} dataset \cite{DBLP:conf/edbt/Tsantarliotis15},  nodes are hashtags of tweets and edges represent the co-appearance of hashtags in a tweet.
The dataset contains 15 daily snapshots from October 27, 2013 to November 10, 2013.

\item The {\as}\footnote{\small https://snap.stanford.edu/data/as.html} dataset represents a communication network of who-talks-to-whom from the BGP (Border Gateway Protocol) logs.  The dataset contains 733 daily snapshots which span an interval of 785 days from November 8, 1997 to January 2, 2000.
\squishend
The  dataset characteristics are summarized in Table \ref{table:real}.

Since we do not have any ground truth information for the real
datasets, we  also use synthetic datasets.
In particular, we create graph snapshots using the forest fire model
\cite{forestfire}, a well-known model for creating evolving networks, using the default forward and backward burning probabilities of 0.35.
Then, we plant dense subgraphs in these snapshots, by randomly selecting a set $X$ $\subset$ $V$ of the nodes and creating additional edges between them, different at each snapshot.

We ran our experiments on a system with a quad-core Intel Core i7-3820 3.6 GHz processor,
with 64 GB memory. We only used one core in all experiments.

\subsection{{\large {\problembff}} evaluation}

Since, as shown in Section~\ref{sec:bffalgos},  {\algobffmm} and  {\algobffaa}
are provably good for the {\problemmm} and {\problemaa} problems
respectively, we only consider these  algorithms for these problems.
For the {\problemma} and {\problemam} problems, we use all three algorithms, i.e.,
{\algobffmm}, {\algobffaa},  and {\algobffgreedy}.
For the {\problemma} problem, we also use the \textit{DCS} algorithm proposed in ~\cite{jethava15finding} for a  problem similar to {\problemma}.
The \textit{DCS} algorithm is also an iterative algorithm that removes nodes, one at a time.
At each step, \textit{DCS} finds the subgraphs with the largest average density for each of the snapshots.
Then, it identifies the subgraph with the smallest average density among them and removes the node that
has the smallest degree in this subgraph.

\spara{Accuracy of {\algobff} and comparison of the density definitions:}
We start by an evaluation of the accuracy of our algorithms along with a comparison of the different aggregate densities.
Since we do not have any ground truth information for the real
data, we use first the synthetic datasets.

First, we create 10 graph snapshots with $4,000$ nodes each using the forest fire model \cite{forestfire}.
Then, in each one of the 10 snapshots we plant
a dense random subgraph $A$ with $100$ nodes by inserting extra edges with  probability $p_A$.
We vary the edge probabilities from  $p_A = 0.1$ to $p_A = 0.9$, and
in Fig.~\ref{fig:f1-synth}(a), we report the $F$ measure achieved for the four density definitions, when trying to recover subgraph $A$.
Recall that the $F$ takes values in $[0,1]$ and the larger the value the better the
recall and precision of the solution with respect to the ground truth (in this case $A$).
{\problemmm} is the most sensitive measure, since it reports $A$ as a dense subgraph even for the smallest edge probability.
{\problemma} and {\problemam} achieve a perfect $F$ value,
for an edge probability larger than $p_A = 0.1$ and {\problemaa} for an edge probability at least $p_A = 0.3$.
For smaller values, these three density definitions locate supersets of $A$, due to averaging.
All variations of the {\algobff} algorithms produce the same results.
We now study how the various density definitions behave when there is a second dense subgraph.
In this case, we plant a subgraph $A$ with edge probability  $p_A = 0.5$ in all snapshots and
a second dense subgraph $B$ with the same number of nodes as $A$
and edge probability $p_B = 0.9$ in a percentage
$\ell$ of the snapshots, for different values of $\ell$.
Fig.~\ref{fig:f1-synth}(b) depicts which of two graphs, graph $A$ (shown in blue), or graph $B$ (shown in red), is output by the {\algobff} algorithms for the different density definitions.
{\problemmm} and {\problemma} report $A$ as the densest subgraph, since these measures
ask for high density at each and every snapshot.
However,  {\problemam} and {\problemaa} report $B$, when the very dense subgraph $B$ appears in a sufficient number (more than half) of the snapshots.
All density definitions and algorithms, recover the exact set $A$, or $B$, at each case.

\begin{figure}[ht!]
\centering
\resizebox{1.\columnwidth}{!}{
\begin{tabular}{cc}
\subfloat[$F$-measure for the $BFF$ problems]{\includegraphics[width=0.5\columnwidth]{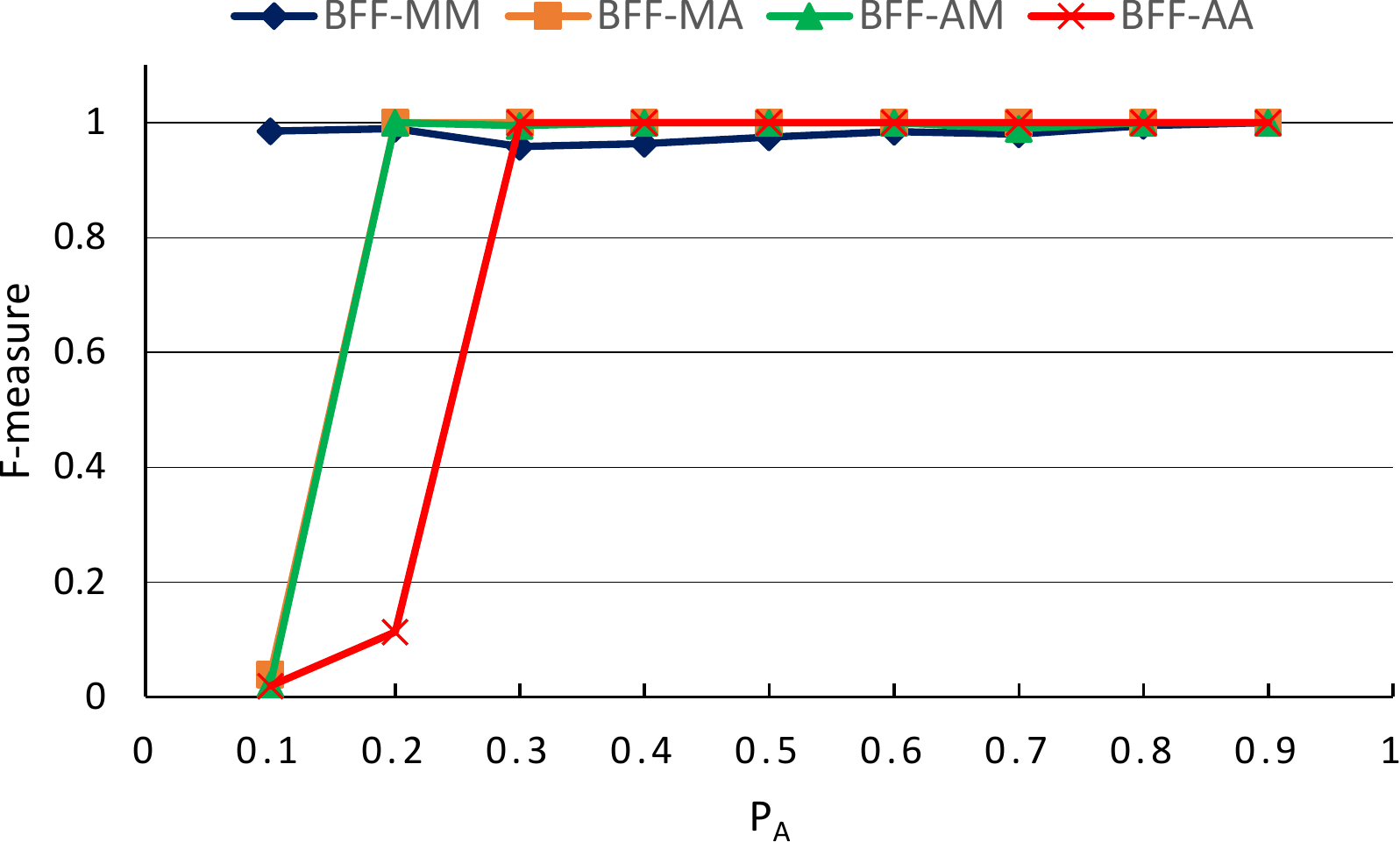}}
   & \subfloat[Dense graph reported by each of the $BFF$ problems]{\includegraphics[width=0.52\columnwidth]{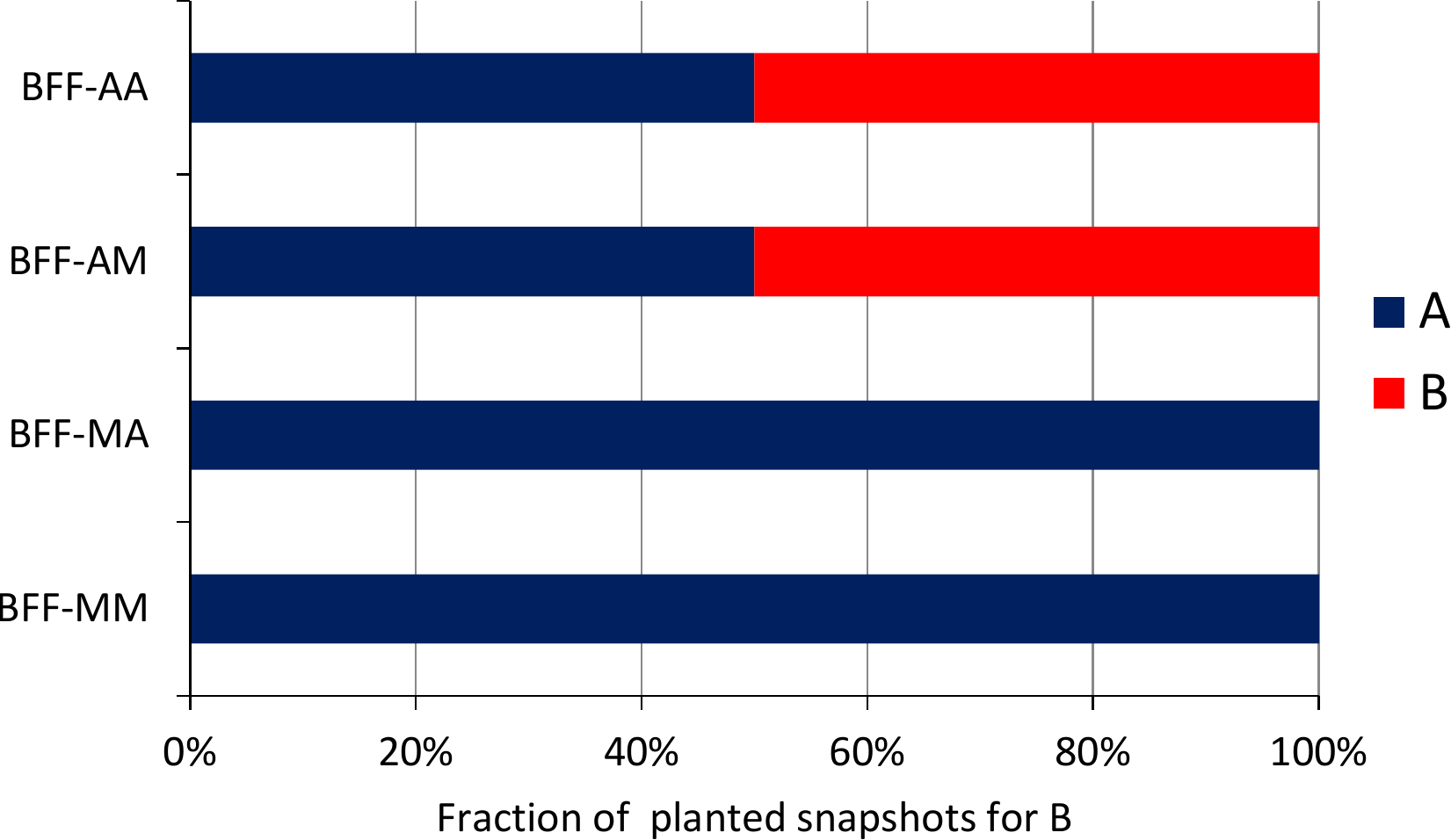}}
\end{tabular}
}
\caption{Accuracy and density definition comparison for {\problembff}}
\vspace{-0.3cm}
\label{fig:f1-synth}
\end{figure}

\begin{table*}[ht!]
\centering
\caption{Results of the different algorithms for the {\problembff} problem on the real datasets.}
\resizebox{1.0\linewidth}{!}{
\begin{tabular}{|c| c| c|| c | c|| c | c || c| c|| c| c|| c| c|| c| c|| c|c||c |c|c|c|}
\hline
\multirow{3}{*}{Datasets} & \multicolumn{2}{|c||}{\small {\problemmm}} & \multicolumn{8}{|c||}{\small {\problemma}} & \multicolumn{6}{|c||}{\small {\problemam}} & \multicolumn{2}{|c|}{\small {\problemaa}} \\ \cline{2-19}
& \multicolumn{2}{|c||}{\small \algobffmm} & \multicolumn{2}{|c||}{\small \algobffmm} & \multicolumn{2}{|c||}{\small \algobffaa} &
\multicolumn{2}{|c||}{\small \algobffgreedy} &  \multicolumn{2}{|c||}{\small {DCS}} &  \multicolumn{2}{|c||}{\small \algobffmm} & \multicolumn{2}{|c||}{\small \algobffaa} & \multicolumn{2}{|c||}{\small \algobffgreedy} & \multicolumn{2}{|c|}{\small \algobffaa}\\ \cline{2-19}
& {\small Size} & {\small $f_{mm}$} & {\small Size} & {\small $f_{ma}$} & {\small Size} & {\small $f_{ma}$} & {\small Size} & {\small $f_{ma}$} & {\small Size} & {\small $f_{ma}$} & {\small Size} & {\small $f_{am}$} & {\small Size} & {\small $f_{am}$} & {\small Size} & {\small $f_{am}$} & {\small Size} & {\small $f_{aa}$}\\
 \hline
{\dbten} & 11 & 1.0 & 3 & 1.33 & 8 & 1.75 & 61 & 1.7 & 14 & 1.29 & 11 & 1.0 & 4 & 1.7 & 4 & 1.0 & 8 & 2.75\\
{\orf} & 33 & 14.0 & 80 & 23.7 & 73 & 23.86 & 80 & 24.05 & 77 & 24.05 & 33 & 14.22 & 35 & 15.0 & 20 & 2.0 & 59 & 25.73\\
{\ort} & 75 & 23.0 & 140 & 44.33 & 131 & 45.24 & 132 & 45.95 & 116 & 44.91 & 63 & 24.44 & 44 & 23.22 & 461 & 3.22 & 147 & 47.89\\
{\cai} & 17 & 8.0 & 33 & 13.76 & 29 & 12.76 & 60 & 15.43 & 57 & 15.05 & 20 & 12.72 & 36 & 18.11 & 311 & 3.43 & 96 & 33.21\\
{\twitter} & - & 0.0 & 836 & 0.04 & 7 & 0.29 & 13 & 0.62 & 720 & 0.05 & - & 0.0 & 3 & 1.0 & 3 & 1.0 & 5 & 1.38\\
{\as} & 15 & 4.0 & 19 & 8.53 & 18 & 6.67 & 20 & 9.0 & 16 & 8.75 & 12 & 7.44 & 14 & 9.05 & 14 & 3.14 & 38 & 16.38\\
\hline
\end{tabular}}
\label{table:size-dens}
\end{table*}

\begin{table*}[ht!]
\centering
\caption{Execution time (sec) of the different algorithms for the {\problembff} problem on the real datasets.}
\resizebox{0.9\linewidth}{!}{
\begin{tabular}{|c|c||c|c|c|c||c|c|c||c|}
\hline
\multirow{2}{*}{Datasets} & \small {\problemmm}  & \multicolumn{4}{c||}{\small {\problemma}}    & \multicolumn{3}{c||}{\small {\problemam}} & \small {\problemaa}  \\ \cline{2-10}
                          & {\small \algobffmm}& {\small \algobffmm} & {\small \algobffaa} & {\small \algobffgreedy} & {\small {DCS}} & {\small \algobffmm} &  {\small \algobffaa} & {\small \algobffgreedy} & {\small \algobffaa} \\ \hline
{\dbten} & 0.08 &  0.05 &   0.03   &  2.04  & 0.34   & 0.05  & 0.08 & 1.58  & 0.04\\ \hline
{\orf} & 0,27 & 0.24 & 0.21 & 48 & 0.83 & 0.48 & 0.57 & 131 & 0.28 \\ \hline
{\ort} & 0.36 & 0.29 & 0.47 & 52 & 1.03 & 0.58 & 0.65 & 57 & 0.48 \\ \hline
{\cai} & 2.24 & 2.51 & 2.30 & 2,519 & 11.22 & 6.31 & 5.97 & 1,652 & 2.14  \\ \hline
{\twitter} & 0.37 & 0.57 & 0.24 & 2.81 & 0.47 & 0.85 & 0.28 & 2.65 & 0.52  \\ \hline
{\as} & 3.49 & 2.82 & 2.16 & 738 & 17.37 & 9.29 & 10.43 & 470 & 2.64  \\ \hline
\end{tabular}}
\label{table:time-dens}
\end{table*}

We also run all algorithms using the real datasets and present the results in
Table \ref{table:size-dens}, where
we report the value of the objective function
and the size of the solution.
A first observation is that as expected, the value of the aggregate density of the reported solution (independently of the problem variant) increases with the density of the graphs.
For {\problemmm} problem we observe that
the solutions usually have small cardinality compared to the solutions for other problems,
since the $f_{\smm}$ objective is rather strict (the solution for {\twitter} was empty).
The solutions for  {\problemmm}  problem in the
autonomous-system datasets
appear to have higher $f_\smm$ scores.
This may be due to the fact that there are larger groups of nodes with lasting connections
in these datasets, e.g., nodes that communicate intensely between each other during the
observation period.

\spara{Comparison of  {\algobff} alternatives for  {\problemma} and   {\problemam}:}
As shown in Table~\ref{table:size-dens}, for the {\problemma} problem,  {\algobffgreedy} and {\algobffaa} perform overall the best in all datasets producing subgraphs with large $f_{\sma}$ values. {\algobffaa}  performs slightly worse than {\algobffgreedy} only in the {\cai} dataset. In the {\cai} dataset, due probably to the large number of snapshots,
{\algobffaa} -- which is based on the average degree -- returns a set with the smallest density.
{\algobffmm} and \textit{DCS} have comparable performance, since they both remove nodes with small degrees in individual snapshots. They are both outperformed by {\algobffaa} and {\algobffgreedy}.

For the {\problemam} problem,
 {\algobffaa} outperforms both {\algobffmm} and {\algobffgreedy}.   Our deeper analysis of the
inferior performance of {\algobffgreedy} for this problem revealed that
{\algobffgreedy} often gets trapped in local maxima after removing just a few nodes of the graph
and it cannot find good solutions.


\spara{Running time:}
In Table \ref{table:time-dens}, we report  execution times.
As expected, the response time of {\algobffgreedy} algorithm is the slowest in all datasets, due to its quadratic complexity.
For the {\problemma} problem, $\algobffaa$ is in general faster than {\sc DCS}.
The difference in times in {\algobffmm} algorithms are due to differences in the computation of the density functions.
Additional experiments including ones with  synthetic datasets with larger graphs and more intervals that show similar behavior are depicted in Figs. \ref{fig:synth_times}(a)(b). 
In particular the Fig. \ref{fig:synth_times}(a) show the execution time of the different algorithms for the {\problemma} problem for varying nodes with $\tau = 10$, whereas Fig. \ref{fig:synth_times}(b) shows the execution time for varying snapshots.

\spara{Summary:}
In conclusion, our algorithms successfully discovered the planted dense subgraphs even when their density is small, with {\problemmm} being the most sensitive measure.
Minimum aggregation over densities (i.e., {\problemmm}, {\problemma}) requires a dense subgraph to be present at all snapshots,
whereas average aggregation over densities  (i.e., {\problemam}, {\problemaa}) asks that the nodes are sufficiently connected with each other on average.
For the {\problemma} and {\problemam}  problems,  $\algobffaa$ returns in general more dense subgraphs than the alternatives (including {\sc DCS}).
Both $\algobffaa$ and $\algobffmm$ scale well. They perform similarly  for the different density functions with the
differences in running time attributed to the complexity of calculating the respective functions.

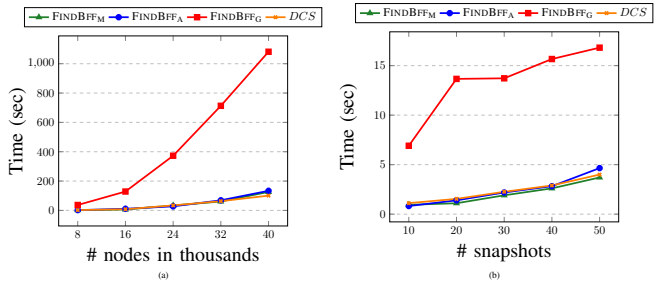
\begin{figure}
\centering
\resizebox{1.0\columnwidth}{!}{
\begin{tabular}{cc}
\subfloat[]
{\begin{tikzpicture}
\begin{axis}[
	xtick = {8, 16, 24, 32, 40},
	xlabel={\# nodes in thousands},
	ylabel={Time (sec)},
	y label style={font=\LARGE},
    x label style={font=\LARGE},
	legend style={legend columns=4, anchor=north,at={(0.50,1.15)}},
	ymajorgrids=true,
	grid style=dashed,
]
	\addplot[color=darkgreen, mark=triangle*,line width=1.5pt,mark size=2pt]coordinates{(8,1.822)(16,6.068)(24,33.545)(32,60.515)(40,126.063)};
	\addplot[color=blue, mark=*,line width=1.5pt,mark size=2pt]coordinates{(8,1.898)(16,11.152)(24,27.398)(32,68.918)(40,133.601)};
	\addplot[color=red, mark=square*,line width=1.5pt,mark size=2pt]coordinates{(8,36.863)(16,128.57)(24,372.646)(32,713)(40,1081.589)};
\addplot[color=orange, mark=x,line width=1.5pt,mark size=2pt]coordinates{(8,2.225)(16,8.047)(24,34.160)(32,63)(40,100.569)};
\legend{{\algobffmm},{\algobffaa},{\algobffgreedy},$DCS$}
\end{axis}
\end{tikzpicture}}
&
\subfloat[]
{\begin{tikzpicture}
\begin{axis}[
	xtick = {10, 20, 30, 40, 50},
	xlabel={\# snapshots},
	ylabel={Time (sec)},
	y label style={font=\LARGE},
    x label style={font=\LARGE},
    scaled y ticks = false,
	legend style={legend columns=4, anchor=north,at={(0.5,1.15)}},
	ymajorgrids=true,
	grid style=dashed,
]
	\addplot[color=darkgreen, mark=triangle*,line width=1.5pt,mark size=2pt]coordinates{(10,0.942)(20,1.101)(30,1.901)(40,2.603)(50,3.704)};
	\addplot[color=blue, mark=*,line width=1.5pt,mark size=2pt]coordinates{(10,0.818)(20,1.394)(30,2.186)(40,2.822)(50,4.650)};
	\addplot[color=red, mark=square*,line width=1.5pt,mark size=2pt]coordinates{(10,6.908)(20,13.661)(30,13.720)(40,15.667)(50,16.816)};
\addplot[color=orange, mark=x,line width=1.5pt,mark size=2pt]coordinates{(10,1.112)(20,1.543)(30,2.257)(40,2.896)(50,3.996)};
\legend{{\algobffmm},{\algobffaa},{\algobffgreedy},$DCS$}
\end{axis}
\end{tikzpicture}}
\end{tabular}}
\caption{Synthetic dataset ($p_A =0.5)$: execution time of the different algorithms for the {\problemma} problem for varying number of (a) nodes,  and (b) snapshots.}
\vspace{-0.2cm}
\label{fig:synth_times}
\end{figure}

\input{plots} 
	
\subsection{{\large {\problemtbff}} evaluation}
In this set of experiments, we
evaluate the performance of the iterative and incremental {\algotbff} algorithms.

\spara{Comparison of the algorithms in terms of solution quality:} Similar to before, we plant a dense random graph $A$ in $k$ snapshots.
We then run the {\algotbff} algorithms with the same value of $k$.
In Fig. \ref{fig:syntho2_1}, we report the $F$ measure for the different values of $k$ expressed as a percentage of the total number of snapshots.
For the iterative {\algotbff} algorithm, the {\it at-least-k}  initialization ({\itk}) outperforms the other two, and it successfully locates $A$ for all four density definitions, when $A$ appears in a sufficient number of snapshots.
Non-surprisingly, all initializations work equally well for average aggregation over time
(i.e., {\problemtbffam} and {\problemtbffaa}).
For the incremental {\algotbff} algorithm,  {\it density} ({\ind}) slightly outperforms
{\it overlap} ({\ino}).
Overall, the incremental algorithms achieve highest $F$, when compared with the iterative ones.

We also conduct a second experiment in which we plant a dense random graph $A$ with edge probability $p_A$ = 0.5 in all snapshots and a dense random graph $B$ with edge probability $p_B$ = 0.9 in $k$  snapshots.
In Fig. \ref{fig:syntho2}, we report the $F$ measure assuming that $B$ is the correct output for the
{\problemtbff} problem for different values of $k$ expressed as a percentage of the total number of snapshots.
Again, by comparing the different initializations for the iterative {\algotbff} algorithm, we observe that among the iterative algorithms, {\itk}  successfully locates $B$ for all four density definitions, when $B$ appears in a sufficient number of snapshots.
As in the previous experiment, all initializations work equally well for average aggregation over time.
The incremental algorithms 
outperform the iterative ones with {\ind} being the champion, since they achieve higher $F$ measure values even when $B$ appears in a few snapshots.

We also apply  the {\algotbff} algorithms on all real datasets for various values of $k$.
In Figs. \ref{fig:dbten} -- \ref{fig:ort}, we report the value of the aggregate density for {\dbten}, {\orf}, and {\ort} for different values of $k$, again expressed as a percentage of the total number of snapshots of the input graph history.
Overall, we observed that, in contradistinction to the experiments with real datasets, the {\it contiguous} initialization ({\itc}) of the iterative {\problemtbffaa} algorithm emerges as the best algorithm in many cases, slightly outperforming  {\ind}.
This is indicative of \emph{temporal locality}
of dense subgraphs in these datasets,
i.e., in these datasets dense subgraphs are usually alive in a few contiguous snapshots.
This is especially evident in datasets from collaboration networks such as the {\db} datasets.
We also notice that the incremental algorithms find solutions with density very close to that of the iterative algorithms.
Finally, we also observe that as $k$ increases the aggregate density of the solutions decrease.
This again is explained by the fact that often dense subgraphs
are only ``alive" in a few snapshots.

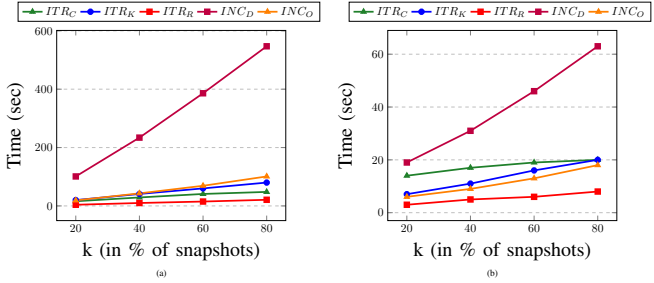
\begin{figure}
\centering
\resizebox{1.0\columnwidth}{!}{
\begin{tabular}{cc}
\subfloat[]
{\begin{tikzpicture}
\begin{axis}[
	xtick = {20, 40, 60, 80},
	xlabel={k (in \% of snapshots)},
	ylabel={Time (sec)},
	y label style={font=\LARGE},
    x label style={font=\LARGE},
    scaled y ticks = false,
	legend style={legend columns=5, anchor=north,at={(0.49,1.15)}},
	ymajorgrids=true,
	grid style=dashed,
]
	\addplot[color=darkgreen, mark=triangle*,line width=1.5pt,mark size=2pt]coordinates{(20,16)(40,29)(60,41)(80,48)};
	\addplot[color=blue, mark=*,line width=1.5pt,mark size=2pt]coordinates{(20,20)(40,41)(60,60)(80,80)};
	\addplot[color=red, mark=square*,line width=1.5pt,mark size=2pt]coordinates{(20,4)(40,10)(60,15)(80,21)};
	\addplot[color=purple, mark=square*,line width=1.5pt,mark size=2pt]coordinates{(20,101)(40,234)(60,386)(80,547)};
\addplot[color=orange, mark=triangle*,line width=1.5pt,mark size=2pt]coordinates{(20,19)(40,43)(60,69)(80,101)};
\legend{$ITR_{C}$,${ITR_K}$,${ITR_R}$,${INC_D}$,${INC_O}$}
\end{axis}
\end{tikzpicture}}
&
\subfloat[]
{\begin{tikzpicture}
\begin{axis}[
	xtick = {20, 40, 60, 80},
	xlabel={k (in \% of snapshots)},
	ylabel={Time (sec)},
	y label style={font=\LARGE},
    x label style={font=\LARGE},
	legend style={legend columns=5, anchor=north,at={(0.49,1.15)}},
	ymajorgrids=true,
	grid style=dashed,
]
	\addplot[color=darkgreen, mark=triangle*,line width=1.5pt,mark size=2pt]coordinates{(20,14)(40,17)(60,19)(80,20)};
	\addplot[color=blue, mark=*,line width=1.5pt,mark size=2pt]coordinates{(20,7)(40,11)(60,16)(80,20)};
	\addplot[color=red, mark=square*,line width=1.5pt,mark size=2pt]coordinates{(20,3)(40,5)(60,6)(80,8)};
	\addplot[color=purple, mark=square*,line width=1.5pt,mark size=2pt]coordinates{(20,19)(40,31)(60,46)(80,63)};
	\addplot[color=orange, mark=triangle*,line width=1.5pt,mark size=2pt]coordinates{(20,6)(40,9)(60,13)(80,18)};
\legend{$ITR_{C}$,${ITR_K}$,${ITR_R}$,${INC_D}$,${INC_O}$}
\end{axis}
\end{tikzpicture}}
\end{tabular}}
\caption{Execution time of the different algorithms for the {\problemtbffmm} problem in (a) Synthetic ($p_A = 0.5$, $p_B = 0.9$), and (b) {\ort} datasets.}
\vspace{-0.3cm}
\label{fig:times}
\end{figure}

\spara{Convergence and running time:} In terms of convergence, iterative
{\algotbff} requires 2-6 iterations to converge in all datasets.
In Fig. \ref{fig:times} we report the execution time of {\problemtbff} algorithms for the {\problemmm} problem in synthetic ($p_A = 0.5$, $p_B = 0.9$), and (b) {\ort} datasets.
As we observed, iterative and incremental {\ino} algorithms scale well with $k$.
Comparing incremental algorithms, {\ino} is up to 6x and 3.5x faster than {\ind} in synthetic and {\ort} datasets respectively due to the quadratic complexity of the latter.
Additional experiments including ones with  synthetic datasets with larger graphs and more intervals are depicted in Fig. \ref{fig:synth_times_o2}(a) and Fig. \ref{fig:synth_times_o2}(b) respectively.
In particular, Fig. \ref{fig:synth_times_o2}(a) shows the execution time of the different algorithms for the {\problemtbffmm} problem for varying number of nodes, with $\tau = 10$ and $k = 6$
whereas Fig. \ref{fig:synth_times_o2}(b) shows the execution time  for varying number of snapshots with $k = \frac{1}{6}~\tau$.

\spara{Summary:}  In conclusion, all algorithms successfully discovered the planted dense subgraphs
that lasted a sufficient percentage  (much less than half) of the snapshots with the incremental ones being more sensitive.
Among the {\algotbff} algorithms,  
incremental algorithms outperform the iterative ones in most cases.
Among the incremental algorithms, {\ind} is slightly better than {\ino}. However,
given the slow running time of {\ind}, {\ino} is a more preferable choice.
Finally, in datasets consisting of dense subgraphs with temporal locality, {\itc} is a good choice for detecting such graphs.

\begin{figure}[h!]
\centering
\resizebox{1.0\columnwidth}{!}{
\begin{tabular}{cc}
\subfloat[]
{\begin{tikzpicture}
\begin{axis}[
	xtick = {8, 16, 24, 32, 40},
	xlabel={\# nodes in thousands},
	ylabel={Time (sec)},
	y label style={font=\LARGE},
    x label style={font=\LARGE},
    ymode=log,
    log basis y={2},
	legend style={legend columns=5, anchor=north,at={(0.49,1.15)}},
	ymajorgrids=true,
	grid style=dashed,
]
	\addplot[color=darkgreen, mark=triangle*,line width=1.5pt,mark size=2pt]coordinates{(8,13.156)(16,46.809)(24,130.671)(32,221.549)(40,443.910)};
	\addplot[color=blue, mark=*,line width=1.5pt,mark size=2pt]coordinates{(8,30.659)(16,137.984)(24,391.808)(32,793.797)(40,1489.815)};
	\addplot[color=red, mark=square*,line width=1.5pt,mark size=2pt]coordinates{(8,16.550)(16,47.828)(24,161.435)(32,264.169)(40,342.921)};
\addplot[color=purple, mark=square*, line width=1.5pt,mark size =2pt]coordinates{(8,94.786)(16,380.322)(24,1175.826)(32,2671.726)(40,4567.362)};
\addplot[color=orange, mark=x,line width=1.5pt,mark size=2pt]coordinates{(8,46.33)(16,173.963)(24,596.999)(32,1127.930)(40,1987.734)};
\legend{$ITR_{C}$,${ITR_K}$,${ITR_R}$,${INC_D}$,${INC_O}$}
\end{axis}
\end{tikzpicture}}
&
\subfloat[]
{\begin{tikzpicture}
\begin{axis}[
	xtick = {10, 20, 30, 40, 50},
	xlabel={\# snapshots},
	ylabel={Time (sec)},
	y label style={font=\LARGE},
    x label style={font=\LARGE},
    ymode=log,
    log basis y={2},
	legend style={legend columns=5, anchor=north,at={(0.49,1.15)}},
	ymajorgrids=true,
	grid style=dashed,
]
	\addplot[color=darkgreen, mark=triangle*,line width=1.5pt,mark size=2pt]coordinates{(10,9.775)(20,31.814)(30,112.358)(40,210.370)(50,378.542)};
	\addplot[color=blue, mark=*,line width=1.5pt,mark size=2pt]coordinates{(10,14.201)(20,60.867)(30,112.358)(40,210.370)(50,372.628)};
	\addplot[color=red, mark=square*,line width=1.5pt,mark size=2pt]coordinates{(10,4.27)(20,11.586)(30,18.338)(40,28.004)(50,35.010)};
\addplot[color=purple, mark=square*, line width=1.5pt,mark size =2pt]coordinates{(10,89.321)(20,899.116)(30,2900)(40,7014)(50,13923)};
\addplot[color=orange, mark=x,line width=1.5pt,mark size=2pt]coordinates{(10,19.109)(20,86.301)(30,175.015)(40,336.885)(50,573.490)};
4.270
\legend{$ITR_{C}$,${ITR_K}$,${ITR_R}$,${INC_D}$,${INC_O}$}
\end{axis}
\end{tikzpicture}}
\end{tabular}}
\caption{Synthetic dataset ($p_A =0.5)$: execution time (log scale) of the different algorithms for the {\problemtbffmm} problem for varying number of (a) nodes, and (b) snapshots.}
\label{fig:synth_times_o2}
\end{figure}
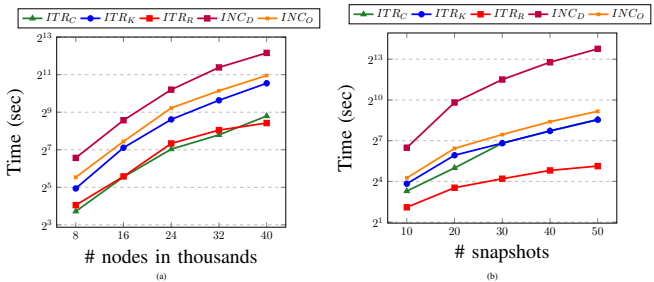

\begin{table}
\caption{The {\problembff} solutions for {\dbten} (in parenthesis dense author subgroups).\vspace{-0.2cm}}
\begin{center}
\resizebox{1.0\columnwidth}{!}{
\begin{tabular}{p{1.0\columnwidth}}
\hline
\multicolumn{1}{c}{\textit{\problemmm}}\\
\hline
\small {(Wei Fan, Philip S. Yu, Jiawei Han, Charu C. Aggarwal), (Lu Qin,
Jeffrey Xu Yu, Xuemin Lin), (Guoliang Li, Jianhua Feng),
(Craig Macdonald, Iadh Ounis)}\\
\hline
\hline
\multicolumn{1}{c}{\textit{\problemma}}\\
\hline
\small {(Wei Fan, Jing Gao, Philip S. Yu, Jiawei Han, Charu C. Aggarwal),
(Jeffrey Xu Yu, Xuemin Lin, Ying Zhang)}\\
\hline
\hline
\multicolumn{1}{c}{\textit{\problemam}}\\
\hline
\multicolumn{1}{c}{\small {(Wei Fan, Jing Gao, Philip S. Yu, Jiawei Han)}}\\
\hline
\hline
\multicolumn{1}{c}{\textit{\problemaa}}\\
\hline
\small {(Wei Fan, Jing Gao, Philip S. Yu, Jiawei Han, Charu C. Aggarwal, Mohammad M. Masud, Latifur Khan, Bhavani M. Thuraisingham)}\\
\hline
\hline
\end{tabular}}
\end{center}
\label{table:authors}
\vspace{-0.3cm}
\end{table}

\subsection{Case studies}
In this section, we report indicative results
we obtained using the {\dbten} and the
{\twitter}  datasets. These results identify  lasting dense author collaborations  and hashtag co-occurrences respectively.

\spara{Lasting dense co-authorships in {\dbten}:}
In Table~\ref{table:authors}, we report the set of nodes
output as solutions to the different {\problembff} problem variants,
on the {\dbten} dataset.

First, observe that three authors ``Wei Fan'', ``Philip S. Yu'', and ``Jiawei Han''
are part of \emph{all} four solutions.
These three authors have co-authored only two papers together in our dataset, but pairs of them have collaborated very frequently over the last decade.
The solutions for {\problemam} and {\problemaa} contain additional collaborators of these authors.
For  {\problemaa} we obtain a solution of 8 authors. Although, this group has no paper in which they are all co-authors, 
subsets of the authors have collaborated
with each other in many snapshots, resulting in high value of $f_\saa$.
The solutions for {\problemmm} and {\problemma} contain the aforementioned three authors and some of their collaborators, but also some new names.  
These are authors that have scarce or no collaborations with the former group.
Thus, in this case, the solutions consist of more than one dense subgroups of authors (grouped in parentheses), that are densely connected within themselves, but sparsely or not connected with others, while this is not the case for {\problemam} and {\problemaa}.

In Table \ref{table:authors_o2}, 
we report results for {\problemtbffmm}, {\problemtbffma}, {\problemtbffam} and {\problemtbffaa} on the same dataset. 
These authors are the most dense collaborators for $k$ = 2, 4, 6, and 8 (recall there are 10 years in the dataset). We also report the corresponding years of their dense collaborations. 
Many new groups of authors appear.
For example, we have new groups of collaborators from Tsinghua University, CMU and RPI among others.
The authors appeared in the solutions of  {\problembff}  
also appear here for large values of $k$.

\begin{table*}[ht!]
\caption{The authors output as solutions to the {\problemtbff} problem
on  {\dbten}.}
\begin{center}
\resizebox{0.9\linewidth}{!}{
\begin{tabular}{|l|l|l|l|l|l|l|l|l|l|}
\hline
\textbf{\small k = 2} & \multicolumn{9}{p{1.8\columnwidth}|}{\textit{\problemmm}, \textit{\problemma}, \textit{\problemam}, \textit{\problemaa}}\\
\hline
 &
\multicolumn{9}{p{1.8\columnwidth}|}{\small Christos Faloutsos, Leman Akoglu, Lei Li, Keith Henderson, Hanghang Tong, Tina Eliassi-Rad} \\
\textsl{\small Years:} & \multicolumn{9}{l|}{\textsl{\small 2010 - 2011}} \\
\hline \hline
\textbf{\small k = 4}& 
\multicolumn{4}{p{0.8\columnwidth}|}{\textit{\problemmm}, \textit{\problemma}, \textit{\problemam}} & \multicolumn{5}{p{\columnwidth}|}{\textit{\problemaa}}\\
\hline
& \multicolumn{4}{p{0.8\columnwidth}|}{\small Mo Liu, Chetan Gupta, Song Wang, Ismail Ari, Elke A. Rundensteiner} &
\multicolumn{5}{p{\columnwidth}|}{\small Yong Yu, Dingyi Han, Zhong Su, Lichun Yang, Shengliang Xu, Shenghua Bao)} \\
\textsl{\small Years:} & 
\multicolumn{4}{p{0.8\columnwidth}|}{\textsl{\small 2010 - 2013}} & 
\multicolumn{5}{p{\columnwidth}|}{\textsl{2007, 2009 - 2011}}\\
\hline\hline
\textbf{\small k = 6}& 
\multicolumn{4}{p{0.8\columnwidth}|}{\textit{\problemmm}, \textit{\problemma}, \textit{\problemam}} & \multicolumn{5}{p{\columnwidth}|}{\textit{\problemaa}}\\
\hline
& \multicolumn{4}{p{0.8\columnwidth}|}{\small Liyun Ru, Min Zhang, Yiqun Liu, Shaoping Ma}&
\multicolumn{5}{p{\columnwidth}|}{\small Min Zhang, Liyun Ru Bhavani, Yiqun Liu, Shaoping Ma), Latifur Khan, M. Thuraisingham, Mohammad M. Masud, Wei Fan, Jing Gao, Philip S. Yu, Jiawei Han}\\
\textsl{\small Years:} & 
\multicolumn{4}{p{0.8\columnwidth}|}{\textsl{2007 - 2012}} & 
\multicolumn{5}{p{\columnwidth}|}{\textsl{2007 - 2012}}\\
\hline\hline
\textbf{\small k = 8}& 
\multicolumn{2}{p{0.4\columnwidth}|}{\textit{\problemmm}} &
\multicolumn{2}{p{0.4\columnwidth}|}{\textit{\problemma}} &
\multicolumn{2}{p{0.4\columnwidth}|} {\textit{\problemam}} &
\multicolumn{3}{p{0.6\columnwidth}|}{\textit{\problemaa}}\\
\hline
& \multicolumn{2}{p{0.4\columnwidth}|}{\small Min Zhang, Yiqun Liu, Shaoping Ma} &
\multicolumn{2}{p{0.4\columnwidth}|} {\small Wei Fan, Jing Gao, Philip S. Yu, Jiawei Han, Charu C. Aggarwal} &
\multicolumn{2}{p{0.4\columnwidth}|}{\small Liyun Ru, Min Zhang, Yiqun Liu, Shaoping Ma} &
\multicolumn{3}{p{0.6\columnwidth}|}{\small Latifur Khan, Bhavani M. Thuraisingham, Mohammad M. Masud, Wei Fan, Jing Gao, Philip S. Yu, Jiawei Han, Charu C. Aggarwal} \\
\textsl{\small Years:} &
\multicolumn{2}{p{0.4\columnwidth}|}{\textsl{\small 2007 - 2014}}  &
\multicolumn{2}{p{0.4\columnwidth}|}{\textsl{\small 2007 - 2008, 2010 - 2015}}  &
\multicolumn{2}{p{0.4\columnwidth}|}{\textsl{\small 2007 - 2014}} &
\multicolumn{3}{p{0.6\columnwidth}|}{\textsl{\small 2007 - 2012, 2014 - 2015}}\\
\hline
\end{tabular}}
\end{center}
\label{table:authors_o2}
\vspace{-0.4cm}
\end{table*}

We also studied experimentally the {\problemqbff} problem. In Table~\ref{table:q-authors}, we show indicative results for three of the authors of this paper as seed nodes. For \textit{E. Pitoura}, we retrieve a group of ex-graduate students with whom she had a lasting and prolific collaboration; for \textit{E. Terzi}  close collaborators from BU University, and for \textit{P. Tsaparas}, a group of collaborators from his time at Microsoft Research. Note that in the last case, the selected set consists of researchers with whom \textit{P. Tsaparas} has co-authored several papers in the period recorded in our dataset, but these
authors are also collaborating amongst themselves.
Finally, we use one of the authors appearing in the dense subgraphs of the {\problemtbff}, namely \textit{C. Faloutsos} as seed node. In this case, we obtain a dense subgraph similar to the one we have reported in Table~\ref{table:authors_o2}.
 Finally, we consider a query with two authors: \textit{C. Faloutsos} and his student \textit{D. Koutra}. Adding \textit{D. Koutra} to the query set changes the consistency of the result, focusing more on authors that are collaborators of both query nodes. 

\spara{Lasting dense hashtag appearances in {\twitter}:}
In Table \ref{table:twitter}, we report results of the {\problemtbff} problem on the {\twitter} dataset.
Note that the results  of the {\problembff} problem on this dataset (as shown in Table \ref{table:size-dens}) are very small graphs, since very few hashtags appear together in all 15 days of the dataset.
As seen in Table \ref{table:twitter}, we were able to discover interesting dense subgraphs of hashtags appearing in $k$ = 3, 6, and 9 of these days. These hashtags correspond to actual events (including f1 races and wikileaks) that were trending during that period.

For each solution, we also report the selected snapshot dates. As expected there is time-contiguity in the selected dates, but our approach also captures the interest fluctuation over time. For example, for the wikileaks topic that is captured in the dense hashtag set \{``wikileaks'', ``snowden'', ``nsa'', ``prism''\}, the best snapshots are collections of contiguous intervals, rather than a single contiguous interval.

Note also, that for large values of $k$, we do not get interesting results which is a fact consistent with the
ephemeral nature of Twitter, where hashtags are short-lived. This is especially true for  $f_\smm$ and $f_\sma$ that impose strict density constraints 
and as a result the solutions consist of disconnected edges.

\begin{table}
\caption{An example of authors output as solutions to the {\problemqbff} problem on  {\dbten}.\vspace{-0.2cm}}
\begin{center}
\begin{tabular}{p{0.9\columnwidth}}
\hline
\multicolumn{1}{c}{\textit{\problemqbff}}\\
\hline
\textbf{E. Pitoura}: G. Koloniari, M. Drosou, K. Stefanidis\\
\textbf{E.Terzi}: V. Ishakian, D. Erdos, A. Bestavros\\
\textbf{P. Tsaparas}: A. Fuxman, A. Kannan, R. Agrawal\\
\textbf{C. Faloutsos, D. Koutra}: Chris H. Q. Ding, L. Akoglu, H. Huang, Lei Li, Tao Li, H. Tong
\\
\hline
\hline
\end{tabular}
\end{center}
\vspace{-0.4cm}
\label{table:q-authors}
\end{table}

\begin{table*}
\caption{The hashtags and the chosen snapshot dates output as solutions to the {\problemtbff} problem on {\twitter}.\vspace{-0.2cm}}
\begin{center}
\resizebox{0.9\linewidth}{!}{
\begin{tabular}{|l|p{0.42\columnwidth}|p{0.7\columnwidth}|p{0.8\columnwidth}|}
\hline
\textbf{\small k = 3}& {\textit{\problemmm}, \textit{\problemma}} & {\textit{\problemam}} & {\textit{\problemaa}}\\
\hline
 &
{\small kimi, abudhabigp, f1, allowin} &
{\small ozpol, nz, mexico, malaysia, signapore, vietnam, chile, peru, tpp, japan, canada} &
{\small abudhabigp, fp1, abudhabi, guti, f1, pushpush, skyf1, hulk, allowin, bottas, kimi, fp3, fp2} \\
\textsl{\small Dates:} & \textsl{\small Oct 31-Nov 2}  & \textsl{\small Oct 27-28, Nov 7} & \textsl{\small Oct 31-Nov 2} \\
\hline\hline
\textbf{k = 6}& {\textit{\problemmm}, \textit{\problemma}} &{\textit{\problemam}} & {\textit{\problemaa}}\\
\hline
&
{\small abudhabigp, f1, skyf1} &
{\small wikileaks, snowden, nsa, prism} &
{\small abudhabigp, fp1, abudhabi, guti, f1, pushpush, skyf1, hulk, allowin, bottas, kimi, fp3, fp2} \\
\textsl{\small Dates:} & \textsl{\small Oct 28-Nov 2} & \textsl{\small Oct 27-28, Nov 3,5,7} & \textsl{\small Oct 28, Oct 30-Nov 1, Nov 9}\\
\hline\hline
\textbf{\small k = 9}& {\textit{\problemmm}, \textit{\problemma}} & {\textit{\problemam}} & {\textit{\problemaa}}\\
\hline
 &
{\small (Too many tags to report)} &
{\small wikileaks, snowden, nsa, prism} &
{\small assange, wikileaks, snowden, nsa, prism}\\
\textsl{\small Dates:} &    & \textsl{\small Oct 27-31, Nov 3,5-7} & \textsl{\small Oct 27-29,31, Nov 3,5-7,10}\\
\hline\hline
\textbf{\small k = 12}& {\textit{\problemmm}, \textit{\problemma}} & {\textit{\problemam}} & {\textit{\problemaa}}\\
\hline
&
{\small  (Too many tags to report)} &
{\small wikileaks, snowden, nsa } &
{\small assange, wikileaks, snowden, nsa, prism} \\
\textsl{\small Dates:} &  & \textsl{\small Oct 27-Nov 1, Nov 3-7,10} & \textsl{\small Oct 27-31, Nov 2-7, 10}\\
\hline
\end{tabular}}
\end{center}
\vspace{-0.4cm}
\label{table:twitter}
\end{table*}
\section{Related Work}
\label{sec:rw}

To the best of our knowledge, we are the first to systematically study all the variants of the {\problembff}, and {\problemtbff} problems.

The research most related to ours is the recent work of
Jethava and
Beerenwinkel~\cite{jethava15finding} and Rozenshtein {\etal}~\cite{rozenshtein14discovering}.
To the best of our understanding, the authors of~\cite{jethava15finding} introduce one of the four variants of the
 {\problembff} problem we studied here, namely, {\problemma}.
 In their paper, the authors
 conjecture that the problem is NP-hard and they propose a heuristic algorithm.
 Our work performs a rigorous and systematic study of the general {\problembff} problem for multiple variants of the aggregate density function.
Additionally, we introduce and study the {\problemtbff} problem, which is not studied in~\cite{jethava15finding}.
The authors of ~\cite{rozenshtein14discovering} study a problem that can be considered a special case of the {\problemtbff} problem.
In particular, their goal is to identify a subset of nodes that are
dense in the graph consisting of the union of edges appearing
in the selected snapshots, which is a weak definition of aggregate
density. Furthermore, they focus on finding collections of
contiguous intervals, rather than arbitrary snapshots.
They propose an algorithm similar to the iterative algorithm we consider, which we have shown  to be outperformed by the incremental algorithms.
There is a huge literature on extracting ``dense'' subgraphs from a single
graph snapshot.
Most formulations for finding subgraphs that define near-cliques are often NP-hard and often hard to approximate due to their
connection to the maximum-clique problem~\cite{DBLP:conf/nips/Alvarez-HamelinDBV05,DBLP:journals/eor/BourjollyLP02,makino04new,DBLP:journals/jco/McCloskyH12,tsourakakis13denser}.
As a result, the problem of finding the subgraph with the maximum average or minimum degree
has become particularly popular, due to its computational tractability.
Specifically, the problem of finding a subgraph with the maximum average degree can be solved optimally in polynomial
time~\cite{DBLP:conf/approx/Charikar00,goldberg1984finding,DBLP:conf/icalp/KhullerS09}, and there exists a practical greedy algorithm
that gives a $2$-approximation guarantee in time linear to the number of edges and nodes
of the input graph~\cite{DBLP:conf/approx/Charikar00}.
The problem of identifying a subgraph with the maximum minimum degree, can be solved optimally in polynomial time~\cite{DBLP:conf/kdd/SozioG10}, using again
the greedy algorithm proposed by Charikar~\cite{DBLP:conf/approx/Charikar00}.
In our work, we use the average and minimum degree to quantify the density of the subgraph in a single graph snapshot, and we extend these definitions to sets of snapshots.
The algorithmic techniques we use for the {\problembff} problem are inspired by the techniques
proposed by Charikar~\cite{DBLP:conf/approx/Charikar00}, and by Sozio and Gionis~\cite{DBLP:conf/kdd/SozioG10}; however, adapting them to handle multiple snapshots is non-trivial.

Existing work also studies the problem of identifying a dense subgraph on time-evolving graphs~\cite{DBLP:conf/www/EpastoLS15,bahmani12densest,bhattacharya15spacetime}; these are
graphs where new nodes and edges may appear over time and existing ones may disappear.
The goal in this line of work is to devise a \emph{streaming algorithm} that at any point in time it reports the densest subgraph for the current version of the graph.
In our work, we are not interested in the dynamic version of the
problem and thus the algorithmic challenges that our problem raises are orthogonal to those
faced by the work on streaming algorithms.
Other recent work \cite{DBLP:conf/icde/MaHWLH17} focuses on
detecting dense subgraphs in
a special class of temporal weighted networks with fixed nodes and edges, where edge weights change over time and may take both positive and negative values.
This is a different problem, since we consider graphs with changing edge sets. Furthermore, density in the presence of edges with negative weights is  different than density when edges have only positive weights.

Finally, another line of research focuses
on processing queries  e.g., reachability, path distance, graph matching, etc. over multiple graph snapshots \cite{DBLP:conf/edbt/SemertzidisPL15,DBLP:conf/www/MoffittS16,DBLP:conf/icde/SemertzidisP16, DBLP:conf/icde/KhuranaD13,DBLP:journals/pvldb/RenLKZC11}.
The main goal of this work is to devise effective storage, indexing and retrieving  techniques so that queries over such sequences of graphs are answered efficiently.
In this paper, we propose a novel problem that of finding dense subgraphs.

\section{Summary}
\label{sec:conclusions}

In this paper, we introduced and systematically studied the problem of identifying dense subgraphs
in a collection of graph snapshots defining a graph history.  We showed that for many definitions of
aggregate density functions the problem of identifying a subset of nodes that are densely-connected
in \emph{all} snapshots (i.e., the {\problembff} problem) can be solved in linear time.  We also demonstrated that other versions of the {\problembff} problem (i.e., {\problemma} and {\problemam}) cannot be solved with the same algorithm.
To identify dense subgraphs that occur in $k$, yet not all, the snapshots of a graph history
we also defined the {\problemtbff} problem.  For all variants of this problem we showed that they are NP-hard and we devised an iterative and an incremental algorithm for solving them.
Our extensive experimental evaluation with datasets from diverse domains 
demonstrated the effectiveness and the efficiency of our algorithms.

\bibliographystyle{IEEEtran}
\bibliography{_paper}

\begin{thebibliography}{10}
\providecommand{\url}[1]{#1}
\csname url@samestyle\endcsname
\providecommand{\newblock}{\relax}
\providecommand{\bibinfo}[2]{#2}
\providecommand{\BIBentrySTDinterwordspacing}{\spaceskip=0pt\relax}
\providecommand{\BIBentryALTinterwordstretchfactor}{4}
\providecommand{\BIBentryALTinterwordspacing}{\spaceskip=\fontdimen2\font plus
\BIBentryALTinterwordstretchfactor\fontdimen3\font minus
  \fontdimen4\font\relax}
\providecommand{\BIBforeignlanguage}[2]{{%
\expandafter\ifx\csname l@#1\endcsname\relax
\typeout{** WARNING: IEEEtran.bst: No hyphenation pattern has been}%
\typeout{** loaded for the language `#1'. Using the pattern for}%
\typeout{** the default language instead.}%
\else
\language=\csname l@#1\endcsname
\fi
#2}}
\providecommand{\BIBdecl}{\relax}
\BIBdecl

\bibitem{DBLP:conf/approx/Charikar00}
M.~Charikar, ``Greedy approximation algorithms for finding dense components in
  a graph,'' in \emph{Approximation Algorithms for Combinatorial Optimization},
  2000, pp. 84--95.

\bibitem{goldberg1984finding}
A.~V. Goldberg, \emph{Finding a maximum density subgraph}.\hskip 1em plus 0.5em
  minus 0.4em\relax University of California Berkeley, CA, 1984.

\bibitem{DBLP:conf/icalp/KhullerS09}
S.~Khuller and B.~Saha, ``On finding dense subgraphs,'' in \emph{ICALP}, 2009,
  pp. 597--608.

\bibitem{DBLP:conf/www/EpastoLS15}
A.~Epasto, S.~Lattanzi, and M.~Sozio, ``Efficient densest subgraph computation
  in evolving graphs,'' in \emph{{WWW}}, 2015, pp. 300--310.

\bibitem{jethava15finding}
V.~Jethava and N.~Beerenwinkel, ``Finding dense subgraphs in relational
  graphs,'' in \emph{{ECML} {PKDD}}, 2015, pp. 641--654.

\bibitem{DBLP:conf/swat/AsahiroITT96}
Y.~Asahiro, K.~Iwama, H.~Tamaki, and T.~Tokuyama, ``Greedily finding a dense
  subgraph,'' in \emph{{SWAT}}, 1996, pp. 136--148.

\bibitem{DBLP:conf/kdd/SozioG10}
M.~Sozio and A.~Gionis, ``The community-search problem and how to plan a
  successful cocktail party,'' in \emph{{ACM} {SIGKDD}}, 2010, pp. 939--948.

\bibitem{DBLP:conf/edbt/Tsantarliotis15}
P.~Tsantarliotis and E.~Pitoura, ``Topic detection using a critical term graph
  on news-related tweets,'' in \emph{Proceedings of the Workshops of the
  {EDBT/ICDT}}, 2015, pp. 177--182.

\bibitem{forestfire}
J.~Leskovec, J.~M. Kleinberg, and C.~Faloutsos, ``Graph evolution:
  Densification and shrinking diameters,'' \emph{{TKDD}}, vol.~1, no.~1, 2007.

\bibitem{rozenshtein14discovering}
P.~Rozenshtein, N.~Tatti, and A.~Gionis, ``Discovering dynamic communities in
  interaction networks,'' in \emph{{ECML} {PKDD}}, 2014, pp. 678--693.

\bibitem{DBLP:conf/nips/Alvarez-HamelinDBV05}
J.~I. Alvarez{-}Hamelin, L.~Dall'Asta, A.~Barrat, and A.~Vespignani, ``Large
  scale networks fingerprinting and visualization using the k-core
  decomposition,'' in \emph{NIPS}, 2005, pp. 41--50.

\bibitem{DBLP:journals/eor/BourjollyLP02}
J.~Bourjolly, G.~Laporte, and G.~Pesant, ``An exact algorithm for the maximum
  k-club problem in an undirected graph,'' \emph{European Journal of
  Operational Research}, vol. 138, no.~1, pp. 21--28, 2002.

\bibitem{makino04new}
K.~Makino and T.~Uno, ``New algorithms for enumerating all maximal cliques,''
  in \emph{{SWAT}}, 2004, pp. 260--272.

\bibitem{DBLP:journals/jco/McCloskyH12}
B.~McClosky and I.~V. Hicks, ``Combinatorial algorithms for the maximum k-plex
  problem,'' \emph{J. Comb. Optim.}, vol.~23, no.~1, pp. 29--49, 2012.

\bibitem{tsourakakis13denser}
C.~E. Tsourakakis, F.~Bonchi, A.~Gionis, F.~Gullo, and M.~A. Tsiarli, ``Denser
  than the densest subgraph: extracting optimal quasi-cliques with quality
  guarantees,'' in \emph{{ACM} {SIGKDD}}, 2013, pp. 104--112.

\bibitem{bahmani12densest}
B.~Bahmani, R.~Kumar, and S.~Vassilvitskii, ``Densest subgraph in streaming and
  mapreduce,'' \emph{{PVLDB}}, vol.~5, no.~5, pp. 454--465, 2012.

\bibitem{bhattacharya15spacetime}
S.~Bhattacharya, M.~Henzinger, D.~Nanongkai, and C.~E. Tsourakakis, ``Space-
  and time-efficient algorithm for maintaining dense subgraphs on one-pass
  dynamic streams,'' in \emph{STOC}, 2015, pp. 173--182.

\bibitem{DBLP:conf/icde/MaHWLH17}
S.~Ma, R.~Hu, L.~Wang, X.~Lin, and J.~Huai, ``Fast computation of dense
  temporal subgraphs,'' in \emph{{ICDE}}, 2017, pp. 361--372.

\bibitem{DBLP:conf/edbt/SemertzidisPL15}
K.~Semertzidis, E.~Pitoura, and K.~Lillis, ``Timereach: Historical reachability
  queries on evolving graphs,'' in \emph{{EDBT}}, 2015, pp. 121--132.

\bibitem{DBLP:conf/www/MoffittS16}
V.~Z. Moffitt and J.~Stoyanovich, ``Towards a distributed infrastructure for
  evolving graph analytics,'' in \emph{{WWW}}, 2016, pp. 843--848.

\bibitem{DBLP:conf/icde/SemertzidisP16}
K.~Semertzidis and E.~Pitoura, ``Durable graph pattern queries on historical
  graphs,'' in \emph{{ICDE}}, 2016, pp. 541--552.

\bibitem{DBLP:conf/icde/KhuranaD13}
U.~Khurana and A.~Deshpande, ``Efficient snapshot retrieval over historical
  graph data,'' in \emph{ICDE}, 2013, pp. 997--1008.

\bibitem{DBLP:journals/pvldb/RenLKZC11}
C.~Ren, E.~Lo, B.~Kao, X.~Zhu, and R.~Cheng, ``On querying historical evolving
  graph sequences,'' \emph{{PVLDB}}, vol.~4, no.~11, pp. 726--737, 2011.

\end{thebibliography}
\newpage
\section{Appendix}
\label{appendix:examples}

In this section we  present counter-examples that demonstrate that the {\algobffmm} and {\algobffaa} when applied to the {\problemma} and {\problemaa} yield a solution that is a poor approximation of the optimal solution. For the following, we use $n = |V|$ to denote the number of nodes in the different snapshots, and $\tau = |\calG|$ to denote the number of snapshots.
\vspace{-0.05in}
\subsection{Proof of Proposition 4}
\begin{proof}
In order to prove our claim we need to construct an instance of the {\problemam} problem where the {\algobffmm} algorithm produces a solution with approximation ratio $O\left(\frac{1}{n}\right)$. We construct the graph history $\calG = \{G_1,...,G_\tau\}$ as follows. The first $\tau -1$ snapshots consist of a full clique with $n-1$ nodes, plus an additional node $v$ that is connected to a single node $u$ from the clique. The last snapshot $G_\tau$ consists of just the edge $(v,u)$.

In the first $n-2$ iterations of the {\algobffmm} algorithm, the node with the minimum minimum degree is one of the nodes in the clique (other than the node $u$). Thus the nodes in the clique will be iteratively removed, until we are left with the edge $(u,v)$. Since node $v$ is present in all intermediate subsets $S_i$, the minimum degree in all snapshots $G_t$ is 1. Therefore, the solution $S$ of the {\algobffmm} algorithm has $f_{am}(S) = 1$. On the other hand clearly the optimal solution $S^*$ consists of the nodes in the clique, where we have minimum degree $n-2$, except of the last instance where the minimum degree is zero. Therefore, $f(S^*) = (n-2)\frac{\tau-1}{\tau}$ which proves our claim.
\end{proof}

\subsection{Proof of Proposition 5}
\begin{proof}
In order to prove our claim we need to construct an instance of the {\problemam} problem where the {\algobffaa} algorithm produces a solution with approximation ratio $O\left(\frac{1}{n}\right)$. We construct the graph history $\calG = \{G_1,...,G_\tau\}$, where $\tau$ is even, as follows. Each snapshot $G_t$ contains $n = 2b+3$ nodes. The $2b$ of these nodes form a complete $b\times b$ bipartite graph. Let $u$, $v$, and $s$ denote the additional three nodes. Node $s$ is connected to all nodes in the graph, in all snapshots, except for the last snapshot where $s$ is connected only to $u$ and $v$. Nodes $u$ and $v$ are connected to each other in all snapshots, and node $u$ is connected to all $2b$ nodes of the bipartite graph in the first $\tau/2$ snapshots, while node $v$ is connected to all $2b$ nodes of the bipartite graph in the last $\tau/2$ snapshots.
Throughout assume that $\tau\geq 2$.
Note that the optimal set $S^*$ for this history graph consists of the $2b$ nodes in the bipartite graph, with $f_\sam(S^*,\calG) = b = \Theta(n)$.

The score $\textit{score}_a$ for every node $w$ of the $2b$ nodes in the bipartite graph is $\textit{score}_a(w,\calG) = b+1+\frac{\tau-1}{\tau}$. For the nodes $u$ and $v$, we have $\textit{score}_a(u,\calG) = \textit{score}_a(v,\calG) = \frac{2b\tau/2 + 2\tau}{\tau} = b+2$. Node $s$ has score $\textit{score}_a(s,\calG) = 2b\frac{\tau-1}{\tau} + 2$.

Therefore, in the first iteration, the algorithm will remove one of the nodes of the bipartite graph. Without loss of generality assume that it removes one of the nodes in the left partition. Now, for a node $w$ in the left partition, we still have that $\textit{score}_a(w,\calG[S_1]) = b+1+\frac{\tau-1}{\tau}$. For a node $w$ in the right partition we have that $\textit{score}_a(w,\calG[S_1]) = b + \frac{\tau-1}{\tau}$. For nodes $u$ and $v$ we have $\textit{score}_a(u,\calG[S_1]) = \textit{score}_a(v,\calG[S_1]) = \frac{(2b-1)\tau/2 + 2\tau}{\tau} = b+\frac 32$. For node $s$ we have that $\textit{score}_a(s,\calG[S_1]) = (2b-1)\frac{\tau-1}{\tau} + 2$.

Therefore, in the second iteration the algorithm will select to remove one of the nodes in the right partition.
Note that the resulting graph $\calG[S_2]$ is identical in structure with $\calG$, with $n = 2(b-1)+3$ nodes. Therefore, the same procedure will be repeated until all the nodes from the bipartite graph are removed, while nodes $u$ and $v$ will be kept in the set until the last iterations. As a result, the set $S$ returned by {\algobffaa} has $f_\sam(S,\calG) = 2$ (the degree of the nodes $u$ and $v$), yielding approximation ratio $O\left(\frac{1}{n}\right)$.
\end{proof}

\vspace{-0.05in}
\subsection{Proof of Proposition 6}
\begin{proof}
In order to prove our claim we need to construct an instance of the {\problemma} problem where the {\algobffmm} algorithm produces a solution with approximation ratio $O\left(\frac{1}{\sqrt{n}}\right)$. We construct the graph history $\calG = \{G_1,...,G_\tau\}$ as follows. We have $\tau=m$ snapshots that are all identical. They consist of two sets of nodes $A$ and $B$ of size $m$ and $m^2$ respectively. The nodes in $B$ form a cycle. The nodes in $A$ in graph snapshot $G_t$ form a clique with all nodes except for one node $v_t$, different for each snapshot. The optimal set $S^*$ consists of the nodes in $A$, that have average degree $\frac{(m-1)(m-2)}{m} = \Theta(m)$.

The {\algobffmm} starts with the set of all nodes. The average degree of any snapshot is $\frac{2m^2 + (m-1)(m-2)}{m^2+m} = \Theta(1)$, which is also the value of the $f_{ma}(V)$ function. In the first $m$ iterations of the algorithm, the nodes in $A$ have $\textit{score}_m(v,S_i) = 0$, so these are the ones to be removed first. Then the nodes in $B$ are removed. In all iterations the average degree in each snapshot remains $O(1)$. Therefore, the set $S$ returned by the {\algobffmm} has $f_{ma}(S) = \Theta(1)$, and the approximation ratio is $\Theta\left(\frac{1}{m}\right)$. Since $m = \sqrt{n}$, this proves our claim.
\end{proof}

\vspace{-0.05in}
\subsection{Proof of Proposition 7}
\begin{proof}
In order to prove our claim we need to construct an instance of the {\problemma} problem where the {\algobffaa} algorithm produces a solution with approximation ratio $O\left(\frac{1}{\sqrt{n}}\right)$. The construction of the proof is very similar to before. We construct the graph history $\calG = \{G_1,...,G_\tau\}$ as follows. We have $\tau=m$ snapshots that are all identical, except for the last snapshot $G_m$. The snapshots $G_1,...,G_{m-1}$ consist of two sets of nodes $A$ and $B$ that form two complete cliques of size $m$ and $m^2$ respectively. In the last snapshot the nodes in $B$ are all disconnected. The optimal set $S^*$ consists of the nodes in $A$, that have $f_{ma}(A) = \frac{m(m-1)}{m} = \Theta(m-1)$.

The {\algobffaa} starts with the set of all nodes. The value of $f_{ma}(V)$ is determined by the last snapshot $G_m$ that has average degree $\frac{m(m-1)}{m^2+m} = \Theta(1)$. The nodes in $A$ have average degree (over time) $\frac{m(m-1)}{m} = \Theta(m)$, while the nodes in $B$ have average degree $\frac{(m-1)(m^2-1)}{m} = \Theta(m^2)$. Therefore, the algorithm will iteratively remove all nodes in $A$. In each iteration the resulting set $S_i$ has $f_{ma}(S_i) = O(1)$. When all the nodes in $A$ are removed, we have that $f_{ma}(S_i) = 0$. Therefore, the approximation ratio for this instance is $\Theta(\frac{1}{m})$. Our claim follows from the fact that $n = m^2+m$.
\end{proof}

\end{document}